\newcommand{\karmoose}{\color{magenta}}
\newtheorem{theorem}{Theorem}[section]
\newtheorem{lemma}[theorem]{Lemma}
\newtheorem{corollary}[theorem]{Corollary}
\newcommand{\Set}{\mathcal}
\begin{document}
\title{\LARGE \bf Distorting an Adversary's View in Cyber-Physical Systems}
\author{
{Gaurav Kumar Agarwal, Mohammed Karmoose, Suhas Diggavi, Christina Fragouli, Paulo Tabuada} \\
	Department of Electrical and Computer Engineering, UCLA, Los Angeles, USA\\
	Email: \{gauravagarwal, mkarmoose, suhasdiggavi, christina.fragouli, tabuada\}@ucla.edu
	\thanks{The work was partially supported by NSF grant 1740047, by the Army Research Laboratory under Cooperative Agreement W911NF-17-2-0196, and by the UC-NL grant LFR-18-548554.
	}
}
\maketitle
\IEEEoverridecommandlockouts
\thispagestyle{empty}
\pagestyle{empty}

\begin{abstract}
In Cyber-Physical Systems (CPSs), inference based on communicated data is of critical significance as it can be used to manipulate or damage the control operations by adversaries. This calls for efficient mechanisms for secure transmission of data since control systems are becoming increasingly distributed over larger geographical areas. Distortion based security, recently proposed as one candidate for CPSs security, is not only more appropriate for these applications but also quite frugal in terms of prior requirements on shared keys. In this paper, we propose distortion-based metrics to protect CPSs communication and show that it is possible to confuse adversaries with just a few bits of pre-shared keys.
\end{abstract}

\section{Introduction}
It is well recognized that wireless networking is essential to realize the potential of new CPS applications, and is equally well recognized that  private and secure exchange of information is a necessary and not simply a desirable condition for the CPS ecosystem to thrive. For instance, personal health data in assisted environments, car positions and trajectories, proprietary interests, all need to be protected. We introduces a new approach to CPS security, that aims to distort an adversary's view of a control system's states.

Our starting observation is that information security measures (cryptographic and information theoretic secrecy), are not well matched to  CPS applications as they
impose unnecessary requirements, such as protecting all the raw data, and thus can cause high operational costs. Cryptographic methods rely on computational complexity: they require short keys, but high complexity at the communicating nodes (that can be simple sensors in some cases), and can impose a significant overhead on short packet transmissions, therefore increasing delay~\cite{wan2016exploiting,Zan2013KeyAA,Trappe,6521318}. Information theoretic methods rely on  keys: they have low complexity and do not add packet overhead, but require the communicating nodes to share large keys - every communication link needs to use a shared secret key (for a one-time pad) of length equal to the entropy (effectively length) of the transmitted data~\cite{shannon1949communication}. These costs accumulate rapidly given that large CPS applications can have dense communication patterns. 

Instead, we propose a lightweight approach, that uses small amounts of key and low complexity operations, and builds around a distortion measure. To illustrate\footnote{Although we illustrate our approach for a specific simple example, it extends to protecting general system states.}, consider  the following simple example of a drone flying motion inside a square, depicted in Fig.~\ref{fig::msb}. The drone starts at any position, and  moves between adjacent points in the grid. It regularly  communicates its location  to a legitimate receiver, Bob; a passive eavesdropper, Eve, wishes to infer the drone's locations, and can perfectly overhear all the transmissions the drone makes. We assume the drone and Bob share just one bit of key, that is secret from Eve, and ask: what is the best use we can make of the key?

Using the one bit of shared key to protect the most significant bit (MSB) is not a good solution. As shown in Fig.~\ref{fig::msb}, the adversary can discover the fake trajectory after a few time steps since this scheme can lead to trajectories that do not adhere to the dynamics or environment constraints. At this point, it can learn the real trajectory by flipping back the MSB (we assume that the used scheme is known to everyone). Similar attacks can be made if we use a one-time pad~\cite{shannon1949communication} using the same keys over time: as time progresses, more fake trajectories can be discovered and discarded.

\begin{figure}[t]
	\centering
	\begin{subfigure}[t]{0.48\linewidth}
		\centering
		\includegraphics[width=1.3in]{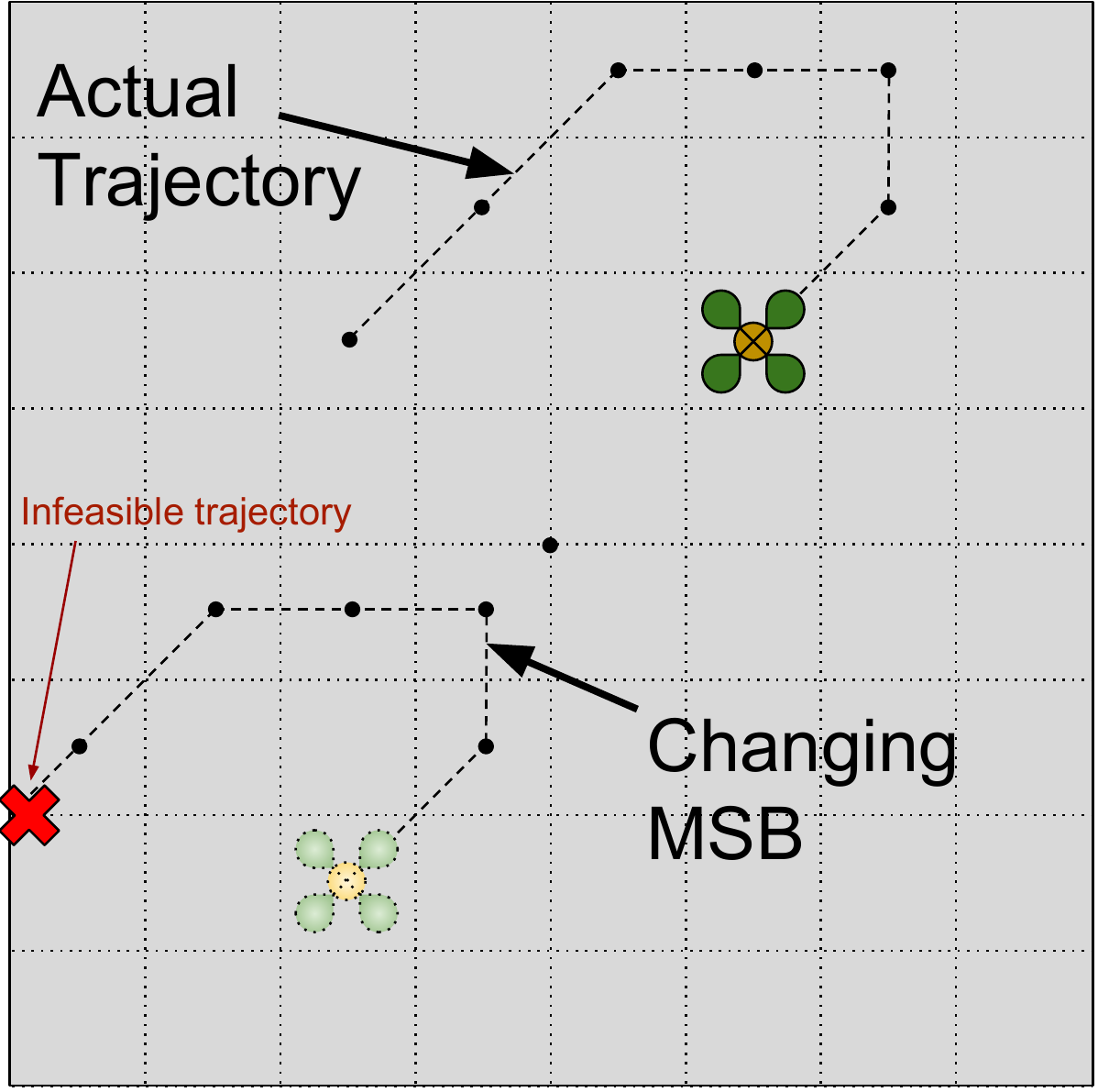}
		\caption{}
		\label{fig::msb}
	\end{subfigure}
	\begin{subfigure}[t]{0.48\linewidth}
		\centering
		\includegraphics[width=1.3in]{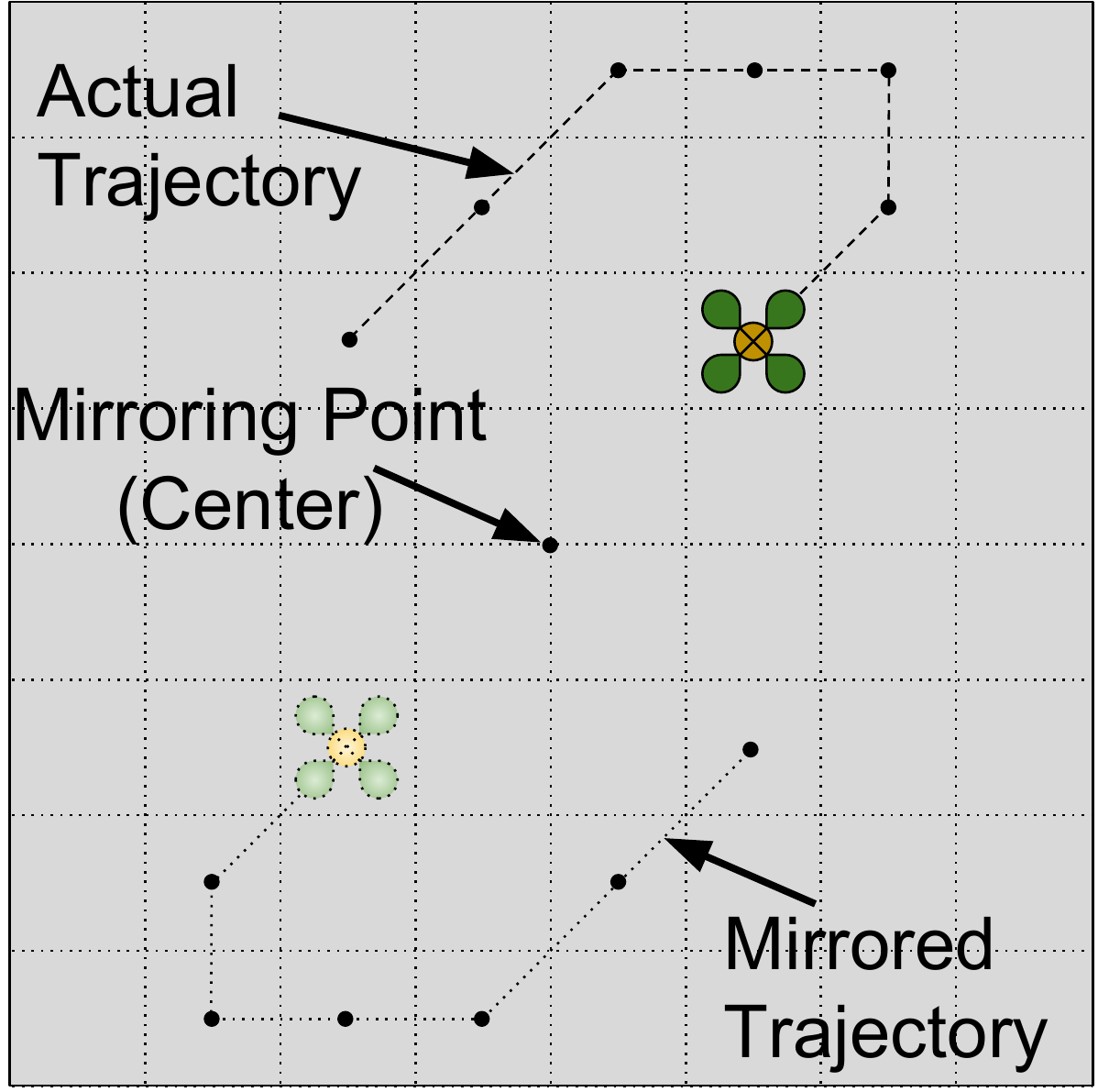}
		\caption{}
		\label{fig::mirroring}
	\end{subfigure}
	\caption{Example of drone motion: (a) protection of the most significant bit (b) mirroring based scheme.}
\end{figure}

Conventional entropy measures also fail to provide insights on how to use the  key. For instance, assume we label the $64$ squares in  Fig.~\ref{fig::msb} sequentially row per row, and consider two cases: in case I, Eve learns that  the drone is in one of the neighboring squares $\{1, 2\}$, each with probability 1/2. For case II, Eve knows that the drone is in one of the squares $\{1, 64\}$, again each with probability 1/2. Both cases are equivalent from an information security perspective since in both cases Eve's uncertainty is a set of size 2 equiprobable elements and hence its entropy is 1. However, the security risk in each situation is different. For example, if Eve aims to take a photo of the drone, in the first case she knows where to turn her camera (squares $1$ and $2$ are close by) while in the second case, she does not (squares $1$ and $64$ are far apart). 

Instead, we propose to use an Euclidean distance distortion measure: how far (in Euclidean distance) is Eve's estimate from the actual location. We then propose encoding/decoding schemes which utilize the shared key to maximize this distance. We first consider a distortion measure averaged over time and trajectories {as we formally define} later. Note that if Eve had not received any of the drone transmissions, then the best (adversarial) estimate of {the} drone's location at any given time is the center point of the confined region in Fig.~\ref{fig::msb}. Therefore, a good encryption scheme would strive to maintain Eve's estimate to be as close to the center point as possible; and we achieve the maximum possible distortion, if, after overhearing the drone's transmissions, Eve's best estimate {still remains} the center point. 

The following scheme can achieve this maximum distortion by using exactly one bit of shared secret key. When encoding, the drone either sends its actual trajectory, or a ``mirrored'' version of it, depending on the value of the secret key. The mirrored trajectory is obtained by reflecting the actual trajectory across a mirroring point in space; in this example, the mirroring point is the center point as shown in Fig.~\ref{fig::mirroring}. Since Eve does not know the value of the shared key, its best estimate of the drone's location - after receiving the drone's transmissions - would be the average location given the trajectory and its mirrored version, which is exactly the center point. 
Our  results in Section~\ref{sec:expected} extend this idea  of mirroring to dynamical systems in higher dimensional spaces, and theoretically analyze the performance in terms of average distortion for a larger variety of distributions (with certain symmetry conditions).

Next, we consider a worst-case distortion-based metric. In this case, our security metric is ``in the worst case, how far is Eve's estimate from the actual location?'' That is, the adversary's distortion may be different for different time instances and different instances of the actual trajectory, and we are interested in the minimum among these.   In Section~\ref{sec:worst} we provide encryption schemes that are suitable for maximizing this distortion metric and show that with $3$ bits of shared key per dimension (i.e., $9$ for three dimensional motion), our schemes achieve near-perfect worst case distortion.
Our main contributions are as follows:\\
%\begin{itemize}
	$\bullet$ We define security measures that are based on assessing the distortion: in the average sense over time and over data, and in the worst-case sense, providing such guarantees at any time and for any particular instances of data. \\
	$\bullet$ For the average distortion, we develop a mirroring based scheme which uses one bit of key and achieves the maximum possible distortion {(equivalent to Eve with no observations)} in some cases. We also discuss the cases where it is sub-optimal and analytically characterize the attained distortion.\\
	$\bullet$ For the worst case distortion, we design a scheme that uses $3$ bits {of key} per dimension and prove it achieves the {maximum possible distortion (equivalent to Eve having no observations)} when the inputs to the system are independent from the previous states.
%\end{itemize}

\noindent\textbf{Related Work.}
Secure data communication where the adversary has unlimited computational power is studied from the lens on information theory, most notably by Shannon~\cite{shannon1949communication} and Wyner~\cite{wyner1975wire}. The study of secure communication from a distortion angle is relatively new and is first studied by Yamamoto~\cite{yamamoto1988rate}, where the goal is to maximize the distortion of an eavesdropper's estimate on a message. Schieler and Cuff~\cite{schieler2014rate} later showed that, in the limit of an infinite block length ($n$) code, only $\log(n)$ bits of secret keys are needed to achieve the maximum possible distortion.  Schemes for single shot communication were considered in~\cite{chiyo17} and exponential benefits for each additional bit of shared key were discussed. However, the above schemes do not directly translate to the scenarios where one has to communicate correlated temporal data like the state of a control system. 

Secure communication in control systems is studied in~\cite{Tsiamis2017StateSecrecyCF, TSIAMIS20178385,tanakaDirected, DOOREN201710090,Malik2013,Pappas16}. These works either provide distortion only at the steady state or use measures like differential privacy {(does not use keys)} and weak information theoretic security; they sometimes also assume that Eve gets different (a subset of the) information than Bob. 
%{\color{blue} Unlike the above works, we also allow the adversary to have knowledge about the system dynamics.} 
%We allow the adversary to have knowledge about the system dynamics and thus could enable easy generalizations to cases where the adversary might have other side information. We analyze the worst case and expected cases separately and give distortion guarantees in both cases.  

\section{System Model}
\label{sec:sysmodel}

\noindent\textbf{Notation.} $X_t$ denote a column vector, and 
$X_{a}^{b} = [X_a^\prime \: X_{a+1}^\prime \: \cdots \: X_b^\prime]^\prime$ for $b\geq a$ and $a,b \in \mathbb{Z}$; $f_X(x)$ denotes the probability density function of a random vector $X$; for any random vector $Y$, we denote the mean vector and covariance matrix of $Y$ by $\mu_{Y}$ and $R_{Y}$ respectively, thus for example, the mean and the covariance matrix of $X_a^b$ will be denoted by $\mu_{X_a^b}$ and $R_{X_a^b}$ respectively; for a matrix $A$, $A^\prime$ denotes the transpose of $A$ and  $A^r$ denotes the $r$-th power of $A$; $[m]:= \{1,2,\ldots,m\}$ where $m \ \in \mathbb{Z^+}$.  

\noindent\textbf{System Dynamics.} We consider the linear dynamical system,
\begin{align}
X_{t+1} &= A X_{t} + B U_t + w_t,  \  \ \ 
Y_{t} = C X_{t} + v_t, \label{eq:sys_model}
\end{align}
{where $X_t \in \mathbb{R}^n$ is the state of the system at time $t$, $U_t \in \mathbb{R}^m$ is the input to the system at time $t$, $w_t \in \mathbb{R}^n$ is the process noise, $Y_t$ { are the} system observations, and $v_t \in \mathbb{R}^p$ is the observation noise.}
{Let} $X = X_1^T$, $U = {U_{1}^{T-1}}$ and ${w = w_1^{T-1}}$.
Based on the {initial $X_1$ and target $X_T$ states}, the controller computes {$U_1^{T-1}$ which} moves the system from $X_1$ to $X_T$.

\noindent\textbf{Communication and Adversary Models.} At each time instance the system transmits information about its state to a legitimate receiver, which is referred to as Bob, via a noiseless link. This situation occurs for example when Bob is remotely monitoring the execution of the system as in {Supervisory Control And Data Acquisition (SCADA)} system. A malicious receiver, Eve, is assumed to eavesdrop on the communication between the system and Bob and is able to receive all transmitted signals. Eve is assumed to be passive: she does not actively communicate but is interested in learning the system's states from $t=1$ to $T$.
We assume that the system and Bob have a shared { $k$-bit} key $K$ which they use to encode/decode the transmitted messages.

\noindent\textbf{Inputs and States Random Process Model.} We assume that both Eve and Bob are aware of the system model, the matrices $A, B, C$ and the statistics of noises. From the perspective of Eve, the input and output sequences have random distributions which depend on $A, B, C$ and the statistics of the noise. In addition to the process noise $w$, the joint distribution $f(X,U,w)$ depends on the initial and target states and the control law of the system. So, even in noiseless systems, $X$ and $U$ possess inherent randomness from Eve's perspective due to her lack of knowledge about the control law and the initial and target states. In general, the control inputs $U$ can be dependent on the system states $X$. However, knowing the marginal {distribution} of {$U$ in noiseless systems} can specify the marginal {distribution} of $X$. This follows by noting that $X_2^T = Q U + \tilde{Q} W + \hat{Q} X_1$, where $Q$ and $\tilde{Q}$ are lower triangular {block matrices with the $(i,j)$th block submatrices, $i \geq j$, being $A^{i-j}B$ and $ A^{i-j}$ repsectively,} and $\hat{Q} = [A^\prime \: ({A^{2}})^\prime \: \cdots \: ({A^{T-1}})^\prime ]^\prime$. This implies that for noiseless systems, the marginal distribution of $U$ would imply the marginal distribution of $X_2^T$ for a given initial state $X_1$ and thus the marginal distribution of $X$. {For a given $X_1$}, the mean vector and covariance matrix of $X_2^T$ become $\mathbb{E}(X_2^T) = \mu_{X_2^T} = Q \mu_U + \hat{Q} \mu_{X_1}$ and $R_{X_2^T} = Q R_U Q^T$.

\noindent \textbf{Encoding Model.} The system transmits a packet $Z_t$ at each time step $t$.
%, and the collection of these transmissions is denoted by $Z_1^T$. 
%We distinguish between two modes of encoding.
%
%\subsubsection{Non-causal Encoding} In this case, the system has complete non-causal knowledge of the trajectory, including future inputs and noise components. In that case, the $t$-th packet is
%
%\begin{equation}
%Z_t \coloneqq \mathcal{E}_t^{\text{NC}} (Y_1^{T}, K),
%\end{equation}
%where $\mathcal{E}_t^{\text{NC}}$ is the encoding function used by the system at time $t$. This case is only practically relevant in perfectly controlled environments where future observations can be perfectly estimated by the system. However, for more realistic settings, this case can serve as a benchmark performance against which we compare more realistic scenarios.
%
%\subsubsection{Causal Encoding}
%The $t$-th transmitted packet can be a function of all previous observations and the shared keys, thus,
%
%\begin{equation}
%Z_t \coloneqq \mathcal{E}_t (Y_1^{t}, K),
%\end{equation}
The $t$-th transmitted packet can be a function of all previous observations and the shared keys, thus,
$Z_t \coloneqq \mathcal{E}_t (Y_1^{t}, K)$, where  $\mathcal{E}_t$ is the encoding function used at time $t$. We will denote $Z_1^T$ by $Z$.\\
\noindent\textbf{Bob/Eve Models of Decoding.}
Bob {noiselessly receives the} transmitted packets from the {system, and decodes them using the shared key. Then, using the decoded information, it generates an estimate of the state transmissions of the system at times $t=1,\cdots,T$.} 
%Although, one can allow decoding at Bob's end with some small distortion, 
We require Bob to decode losslessly (i.e., with zero distortion). Formally, $H(X_t|Z_1^t, K)=0, \ \forall t \in [T]$, where $H$ is the Shannon entropy~\cite{shannon1949communication}.

Similarly, Eve also {receives} all transmissions from the system. {However,} unlike Bob, she does not have the key $K$. 
Therefore, Eve's estimate of $X_t$ is $\hat{X}_{t} \coloneqq \Set{\phi}_t \left(Z_1^T\right),  t \in [T]$,
where $\phi_t$ is the decoding function used by Eve at time $t$.

\noindent\textbf{Distortion Metrics.} We consider a distortion-based security metric which captures how far (in Euclidean distance) an estimate is from the actual value. More formally, for a given time instance $t$ and a transmitted codeword $Z_1^T$, we {define}
\begin{align}
D(t,Z_1^T) & \coloneqq \mathbb{E}_{X_t|Z_1^T} \left\Vert X_t - \hat{X}_t \right\Vert_2^2 \stackrel{(a)}{=} \text{tr}\left( R_{X_t|Z_1^T} \right), \label{DE_single}
\end{align} 
\noindent where \eqref{DE_single} captures the distortion incurred by Eve's estimate of $X_t$.
Equality in (a) follows because the best (minimizing) estimates of Eve at time $t$ are, $
 \hat{X}_t  = \Set{\phi}_t \left(Z_1^T\right) = \mathbb{E} \left[X_t|Z_1^T \right].$ 
 {This implies that Eve's state estimation is the optimal one given the observations $Z_1^T$. In general, this state estimate is dependent on the time instance. In other words, unless it happens to be the optimal estimate, making a constant estimation of the state hoping that it matches the actual state at some time will lead to high distortion values.}
Because Bob is required to successfully decode - for a given realization of the key, the encoding function can only map one $X_t$ and that key realization to each value of $Z_1^T$. Therefore Eve realizes {that only trajectories from a {particular subset} can be the true trajectory} for a given $Z_1^T$: those are the ones which correspond to each key realization. The expectation in \eqref{DE_single} is in fact taken over the randomness in the key {taking into account} {posterior} probabilities {given} $Z_1^T$. 
{If Eve does not have observations, the expectation is taken} over $X_t$ with prior distribution and will get $D(t,Z_1^T) = \text{tr}(R_{X_t})$.

As $D(t,Z_1^T)$ is a function of time $t$ and the transmitted sequence $Z_1^T$, we consider two overall distortion metrics: the average case distortion (denoted by $D_E$) where we take the expectation over all possible $Z_1^T$ averaged out over time; and the worst case distortion (denoted by $D_W$) where we take the minimum over all possible $Z_1^T$ and time instances.
{
\begin{align}
&\begin{array}{ll}
\text{Average} \\ \text{Distortion}
\end{array} - \:\:
D_E \coloneqq \mathbb{E}_{Z_1^T} \left[ \frac{1}{T} \sum\limits_{t=1}^T D(t, Z_1^T) \right]  \label{eq:average_case}\\ 
% D^{E}_{EW} & \coloneqq \mathbb{E}_{Z_1^T} \left[ \min\limits_{t\in [T]} D^{E}(t, Z_1^T) \right]\\ 
% D^{E}_{WE} & \coloneqq \min\limits_{Z_1^T} \left[ \frac{1}{T} \sum\limits_{t=1}^T D^{E}(t, Z_1^T) \right] \\
&\begin{array}{ll}
\text{Worst Case} \\ \text{Distortion}
\end{array} - \:\: 
D_W \coloneqq \min\limits_{Z_1^T} \left[ \min\limits_{t\in [T]} D(t, Z_1^T) \right] \label{eq:worst_case} .
\end{align} 
}

%{\color{red}Note that the above definitions of $D_E$ and $D_W$ in~\eqref{eq:average_case} and~\eqref{eq:worst_case} imply that Eve's state estimation must be associated to a time instance. In other words, making a random/constant estimation of the state hoping that it matches the actual state at some time will lead to high distortion values.}
{Note that} $D_W$ can be defined even when there is no prior distribution on  $X_1^T$. However, {to provide} a baseline comparison {with} the case when the adversary has no observations, we assume that $X_1^T$ always have a known prior distribution.

\noindent\textbf{Design Goals.} Our goal is to choose the encoding function, $\mathcal{E}_t$, so that Bob can decode loselessly while the distortion is maximized for Eve's estimate. In addition, we seek to achieve this with the minimum amount of shared key $K$. In absence of any observation by Eve, these distortions will be, $D_E^{\text{max}} =  \frac{1}{T} {\sum_{t=1}^T} \text{tr}(R_{X_t})$ and $D_W^{\text{max}} = {\min_{t \in [T]}} \text{tr}(R_{X_t})$.
These {provide} upper bounds as,
\begin{align}
D_E & \!=\!\frac{1}{T} \mathbb{E}_{Z_1^T}\! \sum\limits_{t=1}^{T}\!\text{tr}(R_{X_t|Z_1^T})\!\stackrel{(a)}{\leq}\!\frac{1}{T}\!\sum\limits_{t=1}^{T}\!\text{tr}(R_{X_t})\!=\!D_E^{\text{max}},\label{eq::UpperBound} \\
D_W  &= \min\limits_{Z_1^T}  \min\limits_{t\in [T]} \text{tr}(R_{X_t|Z_1^T})  \leq   \min\limits_{t\in [T]}\mathbb{E}_{Z_1^T} \left[ \text{tr}(R_{X_t|Z_1^T})\right] \nonumber \\
 &\stackrel{(b)}{\leq} \min\limits_{t\in [T]} \text{tr}(R_{X_t}) = D_W^{\text{max}} \label{eq::UpperBoundDW} ,  
 \end{align}
\noindent where (a) and (b) follow by noting that the trace of the conditional covariance matrix is a quadratic (convex) function in $Z_1^T$ and therefore we can use Jensen's inequality.

\section{Optimizing The Average Distortion $D_{E}$}
\label{sec:expected}
In this section, we assume that the control system in \eqref{eq:sys_model} is noise free, that is $v_t = w_t = 0$. Although our results can be extended to an arbitrary observable pair (A,C) in~\eqref{eq:sys_model}, to simplify the exposition we assume the state can be directly measured (C = I).
We now discuss our proposed scheme that uses one bit of shared key and show how the achieved distortion compares to the upper bound in~\eqref{eq::UpperBound}. 
{ As we show later (Corollary~\ref{cor::SymmetricDist})}, this scheme is optimal when the prior distribution on the state have a {point of symmetry}.

\noindent\textbf{Mirroring Scheme.} Let $\tilde{X}_t$ be the state vector $X_t$, mirrored across a affine subspace $\mathcal{V}_t = \{ x \in \mathbb{R}^{n} \: | \: S_t x = b_t \}$,
where $S_t \in \mathbb{R}^{s_t \text{x} n}$ and $b_t \in \mathbb{R}^{s_t} $ This scheme works as follows:
\begin{equation}
Z_t = (1-K) X_t + K \tilde{X}_t, \ \forall t \in [T],
\end{equation}
where $K$ is the shared bit. Since every affine subspace can be written in terms of orthogonal vectors, we assume that $S_tS_t^\prime = I$. 
It is easy to show that the mirrored point $\tilde{X}_t$ is  $(I - 2S_t^\prime S_t)X_t + 2S_t^\prime b_t$ and thus the encoding/decoding complexity of our scheme is $O(n^2)$. 

\textbf{Example.} Consider $X_t \in \mathbb{R}^2$ where $S_t = \frac{1}{\sqrt{2}}[-1 \:\: 1]$ and $b_t = 0$. Then $\tilde{X}_t$ corresponds to reflecting across a line that passes through the origin with a $45^o$ angle. 

The performance of our scheme is as follows.
{
\begin{theorem}(Proof in Appendix V-A)
\label{thm::MirroringPerformanceCausal}
The mirroring scheme with matrices $S_t$ and $b_t$ allows Bob to perfectly estimate $X_t$, and the distortion for Eve is,		
\begin{equation}
\label{eq::DEE_rough_causal}
D_E = \frac{1}{T} \sum\limits_{t=1}^T  \mathbb{E}_{X} \left[ \frac{2 f_{X}(\tilde{{X}})}{f_{X}({X})+f_{X}(\tilde{{X}})} \|S_t{X_t} - b_t\|^2 \right],
\end{equation}
\noindent where $\tilde{X}: = [{\tilde{X}_1}^\prime \:\: {\tilde{X}_2}^\prime \:\: \cdots \:\: {\tilde{X}_T}^\prime]^\prime$ is the mirrored version of $X$. 
\end{theorem}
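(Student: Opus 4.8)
The plan is to exploit the fact that the mirroring map is an involution, so that from Eve's perspective each received codeword $Z$ admits exactly two preimages. First I would verify that $M_t := I - 2S_t'S_t$ satisfies $M_t^2 = I$ (using $S_tS_t' = I$) and that $M_t S_t' = -S_t'$, which together give $\tilde{\tilde{X}}_t = X_t$. Consequently, if $K=0$ the true trajectory is $Z$ itself, and if $K=1$ it is $\tilde{Z}$; Bob, knowing $K$, recovers $X$ exactly by applying the (self-inverse) reflection when $K=1$, which settles lossless decoding.

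Next I would compute Eve's posterior. Assuming a uniform prior on the single key bit, the event $\{Z=z\}$ can be produced either by $(X=z,\,K=0)$ with likelihood $\tfrac12 f_X(z)$ or by $(X=\tilde z,\,K=1)$ with likelihood $\tfrac12 f_X(\tilde z)$ (since reflecting $\tilde z$ returns $z$). Bayes' rule then gives the two-point conditional law of $X$ supported on $\{z,\tilde z\}$ with weights $p = f_X(z)/(f_X(z)+f_X(\tilde z))$ and $q = 1-p$. Eve's MMSE estimate is $\hat X_t = p\, z_t + q\, \tilde z_t$, and a short computation of the conditional covariance yields $D(t,z) = \mathrm{tr}(R_{X_t\mid Z=z}) = pq\,\|z_t - \tilde z_t\|^2$.

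I would then simplify the geometric factor using $z_t - \tilde z_t = 2S_t'(S_t z_t - b_t)$ and $S_tS_t' = I$, which gives $\|z_t - \tilde z_t\|^2 = 4\|S_t z_t - b_t\|^2$, so $D(t,z) = \frac{4 f_X(z)f_X(\tilde z)}{(f_X(z)+f_X(\tilde z))^2}\|S_t z_t - b_t\|^2$. To average over $Z$ I need its marginal density. Because the reflection is orthogonal (Jacobian $1$), $\tilde X$ has density $f_{\tilde X}(z) = f_X(\tilde z)$, so the mixture law of $Z$ is $f_Z(z) = \tfrac12\big(f_X(z)+f_X(\tilde z)\big)$. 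Integrating $D(t,z)$ against $f_Z$, the factor $\tfrac12(f_X(z)+f_X(\tilde z))$ cancels one power of the denominator, leaving $\mathbb{E}_Z[D(t,Z)] = \int f_X(z)\,\frac{2 f_X(\tilde z)}{f_X(z)+f_X(\tilde z)}\,\|S_t z_t - b_t\|^2\,dz$, which is precisely the stated expectation over $X$; summing over $t$ and dividing by $T$ completes the proof.

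The main obstacle I anticipate is not algebraic but bookkeeping about measures: one must be careful that the reflection preserves Lebesgue measure so densities transform without a Jacobian factor, and that the involution property is exactly what guarantees the conditional law of $X$ is supported on the two points $\{z,\tilde z\}$. The decisive simplification is recognizing the cancellation between the mixture density $f_Z$ and the posterior weights in $pq$, which collapses the squared denominator to a single factor and produces the clean final form.
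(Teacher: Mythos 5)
Your proposal is correct and follows essentially the same route as the paper's Appendix V-A proof: identify the two-point posterior $\{z,\tilde z\}$ with Bayes weights $p = f_X(z)/(f_X(z)+f_X(\tilde z))$, compute the conditional variance $pq\,\|z_t-\tilde z_t\|^2 = 4pq\,\|S_tz_t-b_t\|^2$, and average against the mixture density $f_Z(z)=\tfrac12\bigl(f_X(z)+f_X(\tilde z)\bigr)$ to collapse the squared denominator into the stated expectation over $X$. Your explicit verification of the involution property (for Bob's decoding) and of the unit-Jacobian measure preservation are details the paper leaves implicit, but they do not change the argument.
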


Assuming that $f_{X}(x)$ is known, then Theorem~\ref{thm::MirroringPerformanceCausal} provides a closed-form characterization of the achieved average distortion for any mirroring scheme with matrices $S_t$ and $b_t$. Moreover, under some symmetry conditions on $f_{X}(x)$, the expression in \eqref{eq::DEE_rough_causal} simplifies and gives insights on the maximum achievable distortion. This is shown in Corollary~\ref{cor::simplified_mirror}.
\begin{corollary}(Proof in Appendix V-A)
	\label{cor::simplified_mirror}
	If the mirroring scheme matrices $S_t$ and $b_t$ in Theorem~\ref{thm::MirroringPerformanceCausal} are selected such that
	$f_{X}(X) = f_{X}(\tilde{X}), \forall X \in \mathbb{R}^{n T}$, then \eqref{eq::DEE_rough_causal} becomes,
	\begin{equation}
	\label{eq::DEE_simple_causal}
	D_E = \frac{1}{T} \sum\limits_{i=1}^T\!\!\text{tr}\left( S_t R_{X_t} S_t^\prime + (b_t\!-\!S_t \mu_{X_t})( b_t\!-\!S_t \mu_{X_t})^\prime \right)
	\end{equation}
\end{corollary}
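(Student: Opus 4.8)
The plan is to substitute the hypothesized symmetry condition directly into the closed form of Theorem~\ref{thm::MirroringPerformanceCausal} and then reduce the resulting expectation of a squared norm to a trace via the standard second-moment identity. The whole argument is a chain of elementary simplifications, so the work is in getting the substitutions and the bias/covariance bookkeeping exactly right.

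First I would observe that the assumption $f_{X}(X) = f_{X}(\tilde{X})$ forces the data-dependent weight inside each summand to collapse to a constant. Substituting $f_{X}(\tilde{X}) = f_{X}(X)$ into the factor $\tfrac{2 f_{X}(\tilde{X})}{f_{X}(X) + f_{X}(\tilde{X})}$ yields $\tfrac{2 f_{X}(X)}{2 f_{X}(X)} = 1$ for (almost) every realization $X$, so that \eqref{eq::DEE_rough_causal} reduces to
\[
D_E = \frac{1}{T}\sum_{t=1}^T \mathbb{E}_{X}\left[\,\|S_t X_t - b_t\|^2\,\right].
\]
It is important to note that the hypothesis makes this weight genuinely constant \emph{per sample}, not merely constant in expectation; this is the step where the symmetry of the mirroring subspaces is actually used.

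Next, since the integrand $\|S_t X_t - b_t\|^2$ depends only on the $t$-th block $X_t$, the expectation over the full trajectory $X$ marginalizes to an expectation over $X_t$ alone, whose mean and covariance are $\mu_{X_t}$ and $R_{X_t}$. I would then rewrite the scalar as a trace, $\|S_t X_t - b_t\|^2 = \text{tr}\big((S_t X_t - b_t)(S_t X_t - b_t)^\prime\big)$, push the expectation through the linear trace operator, and apply $\mathbb{E}[YY^\prime] = R_{Y} + \mu_{Y}\mu_{Y}^\prime$ to the random vector $Y_t = S_t X_t - b_t$. Because the constant offset $b_t$ leaves the covariance unchanged, $R_{Y_t} = S_t R_{X_t} S_t^\prime$, while $\mu_{Y_t} = S_t\mu_{X_t} - b_t$ gives $\mu_{Y_t}\mu_{Y_t}^\prime = (b_t - S_t\mu_{X_t})(b_t - S_t\mu_{X_t})^\prime$. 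Summing over $t$ and dividing by $T$ reproduces \eqref{eq::DEE_simple_causal} exactly.

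I do not anticipate a substantive obstacle here, since each step is a routine identity; the only places that require care are confirming that the symmetry hypothesis cancels the weight on a per-sample (not merely averaged) basis, and the sign bookkeeping when folding $-b_t$ into the mean so that the outer-product term emerges in the form $(b_t - S_t\mu_{X_t})(b_t - S_t\mu_{X_t})^\prime$ rather than with a spurious sign.
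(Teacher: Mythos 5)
Your proposal is correct and follows essentially the same route as the paper's own proof in Appendix V-A: the symmetry hypothesis collapses the weight $\tfrac{2 f_{X}(\tilde{X})}{f_{X}(X)+f_{X}(\tilde{X})}$ to $1$ almost surely, the expectation marginalizes to $X_t$, and the standard second-moment identity $\mathbb{E}[YY^\prime] = R_Y + \mu_Y\mu_Y^\prime$ applied to $Y_t = S_t X_t - b_t$ yields \eqref{eq::DEE_simple_causal}. The paper states these steps more tersely, but there is no substantive difference in the argument.
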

%\begin{corollary}(Proof in~\cite[Appendix V-A]{arxivCDC})
%	\label{cor::simplified_mirror}
% If the mirroring scheme matrices $S_t$ and $b_t$ in Theorem~\ref{thm::MirroringPerformanceCausal} are selected such that
%\begin{equation}
%\label{eq::MirroringScheme_Cond_causal}
%f_{X}(X) = f_{X}(\tilde{X}), \quad \text{ for all } X \in \mathbb{R}^{n T},
%\end{equation}
%then \eqref{eq::DEE_rough_causal} becomes,
%		
%\begin{equation}
%\label{eq::DEE_simple_causal}
%D_E = \frac{1}{T} \sum\limits_{i=1}^T \text{tr}\left( S_t R_{X_t} S_t^\prime + (b_t-S_t \mu_{X_t})( b_t - S_t \mu_{X_t})^\prime \right)
%\end{equation}
%\end{corollary}
Note that $f_{X}(X) = f_{X}(\tilde{X})$ implies $b_t = S_t \mu_{X_t}$. We can interpret~\eqref{eq::DEE_simple_causal} as follows. 
%First, note that condition \eqref{eq::MirroringScheme_Cond_causal} holds if and only if $b_t$ is selected such that $b_t = S_t \mu_{X_t}$. 
\iffalse
{\karmoose (I SUGGEST TO REMOVE FROM HERE)} To see this, note that $X$ and $\tilde{X}$ have the same distribution, hence the same mean. Therefore, we have
%\begin{align}
% & \mathbb{E} X_t = (I-2S_t^\prime S_t) \mathbb{E} X_t + 2S_t^\prime b_t \nonumber \\
% &\implies S_t^\prime S_t \mu_{X_t} = S_t^\prime b_t \label{eq1}
%% &\implies S_t S_t^\prime S_t \mu_{X_t} = S_t S_t^\prime b_t \\
%% &\stackrel{(a)}{\implies} S_t \mu_{X_t} = b_t,
%\end{align}
%Choosing $b_t = S_t \mu_{X_t}$ satisfies~\eqref{eq1}. Moreover, $S_t$ has a full row rank, therefore \eqref{eq1} has exactly one solution.
\begin{align*}
& \mathbb{E} X_t = (I-2S_t^\prime S_t) \mathbb{E} X_t + 2S_t^\prime b_t \nonumber \\
&\implies S_t^\prime S_t \mu_{X_t} = S_t^\prime b_t \implies S_t S_t^\prime S_t \mu_{X_t} = S_t S_t^\prime b_t \\
& \stackrel{(a)}{\implies} S_t \mu_{X_t} = b_t,
% &\implies S_t S_t^\prime S_t \mu_{X_t} = S_t S_t^\prime b_t \\
% &\stackrel{(a)}{\implies} S_t \mu_{X_t} = b_t,
\end{align*}
where (a) follows because $S_t S_t^\prime = I$. {\karmoose (TO HERE)}
\fi
Assuming that $f_{X}(X) = f_{X}(\tilde{X})$ is met, then the distortion becomes $D_E = \frac{1}{T} \sum\limits_{i=1}^{T} \text{tr}(S_t R_{X_t}S_t^\prime)$. The achieved distortion therefore depends on the choice of $S_t$: if $S_t = I$ then the maximum distortion can be achieved by our mirroring scheme. However, such a choice of $S_t$ may not be able to ensure that $f_{X}(X) = f_{X}(\tilde{X})$ is met, as we will see in some of the following examples.
One case for which $S_t = I$ satisfies $f_{X}(X) = f_{X}(\tilde{X})$ and allows maximum distortion is when $X$ is symmetrically distributed around a point. We show this in the next corollary.}
% then 1) this point is the mean of the variable $X$ and 2) the maximum distortion for $D_E$ can be achieved.\
% The following corollary states this observation more formally.
	
	\begin{corollary}
		\label{cor::SymmetricDist}
		For a random vector $X \in \mathbb{R}^{Tn}$, if there exists a point $v \in \mathbb{R}^{Tn}$ for which $f_{X}(X) = f_{X}(2v - X)$, $\forall X \in \mathbb{R}^{Tn}$, then $D_E = \frac{1}{T} \text{tr}(R_X) = \frac{1}{T}\sum\limits_{t=1}^{T} \text{tr}(R_{X_t})$.
	\end{corollary}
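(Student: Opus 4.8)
The plan is to realize the point-symmetric case as a special instance of the mirroring scheme analyzed in Corollary~\ref{cor::simplified_mirror}, choosing the mirroring data so that each component is reflected through the symmetry point. Concretely, I would take $S_t = I$ (the $n \times n$ identity, so $s_t = n$ and $S_t S_t' = I$) and $b_t = v_t$ for every $t$, where $v = [v_1' \; \cdots \; v_T']'$ is the stacked decomposition of the symmetry point $v$. With this choice the affine subspace $\mathcal{V}_t = \{x : S_t x = b_t\}$ collapses to the single point $\{v_t\}$, and the mirrored state becomes $\tilde{X}_t = (I - 2 S_t' S_t) X_t + 2 S_t' b_t = 2 v_t - X_t$. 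Stacking over $t$ gives $\tilde{X} = 2v - X$.

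First I would verify that the density-invariance hypothesis of Corollary~\ref{cor::simplified_mirror} is met for this choice. By the assumed point symmetry, $f_X(X) = f_X(2v - X) = f_X(\tilde{X})$ for all $X \in \mathbb{R}^{Tn}$, so the corollary applies and
\begin{equation*}
D_E = \frac{1}{T} \sum_{t=1}^T \text{tr}\left( S_t R_{X_t} S_t' + (b_t - S_t \mu_{X_t})(b_t - S_t \mu_{X_t})' \right).
\end{equation*}

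Next I would eliminate the cross term. Point symmetry also pins the mean at the symmetry point: taking expectations in the distributional identity $X \stackrel{d}{=} 2v - X$ (which the density symmetry entails) yields $\mu_X = 2v - \mu_X$, hence $\mu_X = v$ and therefore $\mu_{X_t} = v_t = b_t$ for each $t$. Thus $b_t - S_t \mu_{X_t} = v_t - \mu_{X_t} = 0$ and the second summand vanishes, while $S_t = I$ reduces the first summand to $\text{tr}(R_{X_t})$. This gives $D_E = \frac{1}{T} \sum_{t=1}^T \text{tr}(R_{X_t})$. Finally, since the diagonal blocks of the covariance $R_X$ are exactly the $R_{X_t}$, summing their traces yields $\text{tr}(R_X)$, so $D_E = \frac{1}{T}\text{tr}(R_X) = \frac{1}{T}\sum_{t=1}^T \text{tr}(R_{X_t})$.

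There is no serious analytic obstacle here, since the statement is essentially a specialization of Corollary~\ref{cor::simplified_mirror}. The only points needing care are bookkeeping ones: confirming that the global point symmetry on $\mathbb{R}^{Tn}$ descends to the componentwise reflections $\tilde{X}_t = 2 v_t - X_t$ after stacking, and observing that the same symmetry forces $\mu_X = v$ so that the cross term drops and the choice $S_t = I$ is consistent with the density-invariance requirement.
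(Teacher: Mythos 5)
Your proof is correct and follows essentially the same route as the paper: both specialize the mirroring scheme to $S_t = I$ with $b_t$ at the symmetry point, use the fact that the point symmetry forces $\mu_X = v$, verify $f_X(X) = f_X(\tilde{X})$, and invoke Corollary~\ref{cor::simplified_mirror} to kill the cross term and recover $\frac{1}{T}\sum_{t=1}^T \text{tr}(R_{X_t})$. The only difference is cosmetic ordering --- you set $b_t = v_t$ first and then deduce $v = \mu_X$, whereas the paper deduces $v = \mu_X$ first and sets $b_t = \mu_{X_t}$ --- which changes nothing of substance.
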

	\begin{proof}
Since $X$ and $2v-X$ have the same distribution, they will have the same mean. This implies that {$v = \mu_X$}. We then use the following mirroring scheme: $S_t = I$, $b_t = \mu_{X_t}$ for $t \in [1:T]$. With this, we get
$\tilde{X_t} = 2 \mu_{X_t} - X_t$, and thus $\tilde{X} = 2 \mu_{X} - X$ where $\tilde{X}: = [{\tilde{X}_1}^\prime \:\: {\tilde{X}_2}^\prime \:\: \cdots \:\: {\tilde{X}_T}^\prime]^\prime$ and $\mu_X: = [{\mu_{X_1}}^\prime \:\: {\mu_{X_2}}^\prime \:\: \cdots \:\: {\mu_{X_T}}^\prime]^\prime$. This implies, $f_X(X) = f_X(\tilde{X}),  \ \forall X \in \mathbb{R}^{n T}$. Therefore the distortion is $D_E^{\max}$.
% \begin{align*}
% D_E = \frac{1}{T} \sum\limits_{t=1}^{T} \text{tr} (R_{X_t}) . 
% \end{align*}
\end{proof}

%\subsection{Examples}
%In this section, we show the implications of our results in the context of a few examples.\\
We now illustrate our results for few examples.\\
% 	
% \noindent\textbf{Example 1.} Consider the example discussed in Figure~\ref{fig::mirroring}. This case can be modeled by having $X_i$ to be the location of the drone at time $i$. The scenario in Figure~\ref{fig::mirroring} implies that $\mu_i$ is the center point. Since we assume that all possible trajectories are equi-probable, then $X$ is uniformly distributed across the set of possible trajectories in this example. By virtue of Corollary \ref{cor::SymmetricDist}, we can achieve the maximum distortion by setting $b_i$ to be equal to the center point, and $S_i = I$.
% 	
\noindent\textbf{Example 1.} 
Assume $U$ is distributed as Gaussian with mean $\mu_U$ and covariance matrix $R_U$. Then {for a zero initial state, $X_2^T$} is also Gaussian distributed with mean ${\mu_{X_2^T}} = Q \mu_U$ and covariance ${R_{X_2^T}} = Q R_U Q^T$, as we assume the noise to be zero. A Gaussian random vector satisfies the conditions in Corollary \ref{cor::SymmetricDist}, and therefore we can get maximum distortion by setting $b_t = \mu_{X_t}$ and $S_t = I$.
\iftrue
	
The next example is based on a Markov-based model for the dynamical system and uses the following lemma.
	
\begin{lemma}
\label{lem::MarkovModel}
Consider the random vectors $X_t$ where the following conditions hold: 1) $f_{X_1}(x_1) = f_{X_1}(2\mu_1 - x_1)$ and 2) $f_{X_t|X_{t-1}}(x_t | x_{t-1}) = f_{X_t|X_{t-1}}(2 \mu_t - x_t | 2\mu_{t-1} - x_{t-1})$. Then for this case, $f_{X}(X) = f_{X}(2 \mu - X)$, where $\mu = [{\mu_1}^\prime \:\: {\mu_2}^\prime \:\: \cdots \:\: {\mu_T}^\prime]^\prime$. Therefore, by virtue of Corollary \ref{cor::SymmetricDist}, mirroring schemes can achieve the maximum distortion.
\end{lemma}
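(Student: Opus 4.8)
The plan is to exploit the Markov structure of the process so that the joint density factors into the initial density and the one-step transition kernels, and then apply the two hypotheses factor-by-factor. Since condition~2) is phrased purely in terms of the one-step conditionals $f_{X_t\mid X_{t-1}}$, I would first make explicit that $\{X_t\}$ is a Markov chain, i.e. $f_{X_t\mid X_1^{t-1}} = f_{X_t\mid X_{t-1}}$, which lets me write
\begin{equation}
f_{X}(x) = f_{X_1}(x_1)\prod_{t=2}^{T} f_{X_t\mid X_{t-1}}(x_t\mid x_{t-1}).
\end{equation}

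Next I would substitute the reflected argument $2\mu - x$, which acts blockwise as $x_t \mapsto 2\mu_t - x_t$, into this factorization:
\begin{equation}
f_{X}(2\mu - x) = f_{X_1}(2\mu_1 - x_1)\prod_{t=2}^{T} f_{X_t\mid X_{t-1}}(2\mu_t - x_t \mid 2\mu_{t-1} - x_{t-1}).
\end{equation}
Now condition~1) rewrites the leading factor as $f_{X_1}(x_1)$, and condition~2) rewrites each transition factor as $f_{X_t\mid X_{t-1}}(x_t\mid x_{t-1})$; multiplying these back together reproduces exactly the factorization of $f_{X}(x)$, yielding $f_{X}(2\mu - x) = f_{X}(x)$ for all $x$, which is the claimed symmetry with point of symmetry $v = \mu$.

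Finally I would invoke Corollary~\ref{cor::SymmetricDist} directly: the identity $f_{X}(X) = f_{X}(2\mu - X)$ is precisely its hypothesis (with $v=\mu$), so $\mu$ must coincide with the mean $\mu_X$, and the corollary gives $D_E = \frac{1}{T}\sum_{t=1}^{T} \text{tr}(R_{X_t})$, i.e. the mirroring scheme with $S_t = I$ and $b_t = \mu_{X_t}$ attains the maximum average distortion.

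The computation itself is a one-line telescoping of the factorization, so the main subtlety is ensuring the Markov property is genuinely available to justify the factorization: condition~2) only constrains one-step conditionals, and without Markovianity (or an equivalent chain-rule hypothesis on the higher-order conditionals) the joint symmetry would not follow. A secondary point worth stating is the blockwise action of the reflection, namely that reflecting $X$ about $\mu$ in $\mathbb{R}^{Tn}$ is the same as reflecting each $X_t$ about $\mu_t$ in $\mathbb{R}^{n}$; this is immediate from the stacked definitions of $X$ and $\mu$, but it should be noted so that conditions~1) and~2) are matched to the correct blocks.
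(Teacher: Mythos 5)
Your proof is correct. Note, however, that the paper states Lemma~\ref{lem::MarkovModel} without any proof at all, so there is no in-paper argument to compare against; your argument---factoring the joint density by the Markov chain rule, $f_X(x) = f_{X_1}(x_1)\prod_{t=2}^{T} f_{X_t\mid X_{t-1}}(x_t\mid x_{t-1})$, and then applying hypotheses 1) and 2) factor by factor---is the natural and evidently intended one, and the telescoping goes through exactly as you describe. The subtlety you flag is also genuine rather than pedantic: hypotheses 1) and 2) by themselves constrain only the initial marginal and the one-step conditionals, which do not determine the joint law, so the claimed symmetry $f_X(x) = f_X(2\mu - x)$ really does require Markovianity (or some equivalent condition on the higher-order conditionals). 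The paper supplies this assumption only implicitly, in the prose preceding the lemma (``a Markov-based model for the dynamical system''), and you are right to promote it to an explicit hypothesis; without it one can construct processes satisfying 1) and 2) whose joint density is not symmetric about $\mu$. Your final step, invoking Corollary~\ref{cor::SymmetricDist} with $v = \mu$ (equivalently, the mirroring scheme $S_t = I$, $b_t = \mu_{X_t}$), matches precisely how the paper uses the lemma in its Example~2.
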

	
%Lemma \ref{lem::MarkovModel} allows us to characterize the performance of the following example.	
\noindent\textbf{Example 2.} Consider {the following random walk} mobility model. Let $a \in \mathbb{N}^+$, and $X_t$ be its location at time $t$, then,	
\begin{align*}
X_1 &\sim \text{Uni}([-a:a]) \\
X_{t}|X_{t-1} &\sim \text{Uni}([-a:a] \cap \{X_{t-1} -1, X_{t-1}, X_{t-1}+1 \}).
\end{align*}	
One can see that these distributions satisfy the conditions in Lemma \ref{lem::MarkovModel}. Therefore, one can set $b_t = \mu_t = 0$ and $S_t = 1$, which will achieve maximum distortion of $D_E$.
\fi

{
\noindent\textbf{Example 3.} Here we provide a numerical example which shows how our mirroring scheme performs for situations where we do not have an analytical handle on the state distributions. We assume the quadrotor dynamical system provided in~\cite{kumar2012opportunities}. The quadrotor moves in a 3-dimensional cubed space with a width, length and height of 2 meters, where the origin is the center point of the space. The quadrotor starts its trajectory from an initial point $(-1,y_1,z_1)$ and finishes its trajectory at a target point $(1,y_T,z_T)$ after $T$ time steps, where {the points $y_1,z_1,y_T,z_T$ are picked uniformly at random in $[-1,1]^4$}. We assume that $T=10$ time steps, and that the continuous model in~\cite{kumar2012opportunities} is discretized with a sample time of $T_s = 0.5$ seconds. We assume that the quadrotor encodes and transmits only the states which contain the location information {(first three elements of the state vector $X_t$)}. The quadrotor is equipped with an LQR controller which designs the input sequence $U_1^{T-1}$ {which minimizes}
%as the solution of the following problem which minimizes 
$\left\| U \right\|^2 + 10 \left\| X_2^{T-1} \right\|^2$ while ensuring that $X_T$ is equal to the target state.
%
% \begin{equation}
% \begin{array}{ll}
%  \text{minimize} & {\left\| U \right\|^2 + 10 \left\| X_2^{T-1} \right\|^2} \\
%\text{subject to } &  {X_{t+1} = A^{\text{quad}} X_t + B^{\text{quad}} U_t, \quad \forall t \in [T-1] }\\
%  & X_1 = \left[ \begin{matrix}
%           -1 & y_1 & z_1 & 0 & \cdots & 0 
%          \end{matrix}\right]^\prime, \\
%  & X_T = \left[ \begin{matrix}
%           1 & y_T & z_T & 0 & \cdots & 0 
%          \end{matrix}\right]^\prime,
% \end{array}
%\end{equation}
% \begin{equation}
% \begin{array}{ll}
% \text{minimize} & U^\prime U + {10 {X_2^{T-1}}^\prime {X_2^{T-1}}} \\
% \text{subject to } & X_{i+1} = A^{\text{quad}} X_i + B^{\text{quad}} U_i, \quad \forall i \in [1:T-1] \\
% & X_1 = \left[ \begin{matrix}
% -1 & y_1 & z_1 & 0 & \cdots & 0 
% \end{matrix}\right]^\prime, \\
% & X_T = \left[ \begin{matrix}
% 1 & y_T & z_T & 0 & \cdots & 0 
% \end{matrix}\right]^\prime,
% \end{array}
% \end{equation}
%where $A^{\text{quad}}$ and $B^{\text{quad}}$ are the quadrotor's discretized systems matrices. {The remaining states of $X_1$ and $X_T$ are set to zero to allow the drone to hover at the respective locations.}
We perform numerical simulation of the aforementioned setup: we run {$2$ millions} iterations, where in each iteration a new initial and target points are picked, and the resultant trajectory is recorded. Based on the recorded data, we consider different mirroring schemes and numerically evaluate the attained distortion. To facilitate numerical evaluations, the simulation space is gridded into bins with $0.2$ meters of separation, and the location of the drone at each trajectory is approximated to the nearest bin. 

\begin{figure}
 \centering
 \includegraphics[width=2.5in]{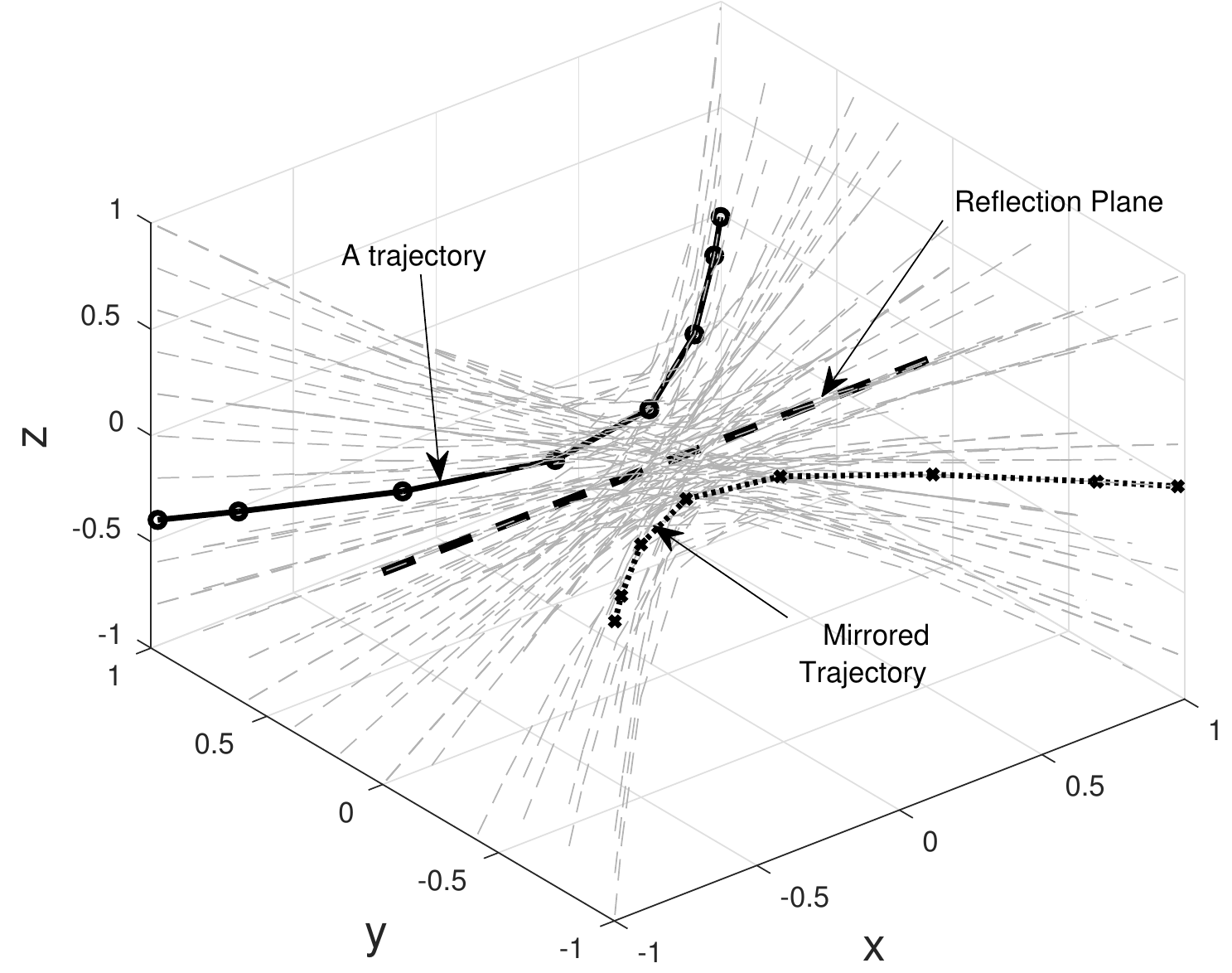}
 \caption{An illustration of some trajectories. The reflection plane is shown as a dashed-black line. One trajectory (solid-black) is shown along with its mirrored image (dotted-black).}
 \label{fig::simulationMirroring}
\end{figure}

Figure \ref{fig::simulationMirroring} shows some of the drone trajectories obtained from our numerical simulation. It is clear that not all trajectories are equiprobable, and therefore the distribution of $X_t$ is not uniform across all bins in space. However, the computation of $\mathbb{E} X_t$ shows {the expected value of the position} to be the origin. Moreover, since the motion of the drone is mainly progressive in the positive x-axis direction, reflection across the origin results in mirrored trajectories that are progressing in the opposite direction, and therefore are identified to be fake automatically. Therefore, mirroring across a point here is useless: the numerically computed distortion for this scheme is equal to zero.

Next we consider mirroring across the reflection plane shown in Figure~\ref{fig::simulationMirroring}, where $b_t = 0$ and $S_t = \left[ \begin{matrix}0 & 1 & 0 \\ 0 & 0 & 1 \end{matrix}\right]$. As can be seen from the figure, the reflection plane is indeed an axis of symmetry for the distribution of the drones trajectories, and therefore is expected to provide high distortion values. We numerically evaluate the attained distortion using the scheme by using equation \eqref{eq::DEE_rough_causal}, which evaluates to $D_E = 0.3971$. This is slightly less than $D_E^{\max} = 0.3979$.

\section{Optimizing The Worst Case Distortion $D_W$}
\label{sec:worst}

%\begin{figure*}[h]
%	\centering
%	\begin{subfigure}[t]{0.32\linewidth}
%		\centering
%		\includegraphics[width=2.4in]{figures/gaussian_mirror_scheme.png}
%		\caption{}
%		\label{fig:plot_distortion_mirror}
%	\end{subfigure}
%	\begin{subfigure}[t]{0.32\linewidth}
%		\centering
%		\includegraphics[width=2.4in]{figures/gaussian_shift_scheme.png}
%		\caption{}
%		\label{fig:plot_distortion_shift}
%	\end{subfigure}
%	\begin{subfigure}[t]{0.32\linewidth}
%		\centering
%		\includegraphics[width=2.4in]{figures/gaussian_mirror_shift_scheme.png}
%		\caption{}
%		\label{fig:plot_distortion_mirror_shift}
%	\end{subfigure}
%	\caption{$\text{Var}(X|Z)$ Vs Z (a) for mirroring based scheme; $D_W = 0$, (b) for shift based scheme; $D_W = 0$, (c) for mirroring+shift based scheme with $\theta_1=1.76$; $D_W = 0.4477$.}
%\end{figure*}

\begin{figure}[h]
	\centering
		\includegraphics[width=0.8\linewidth]{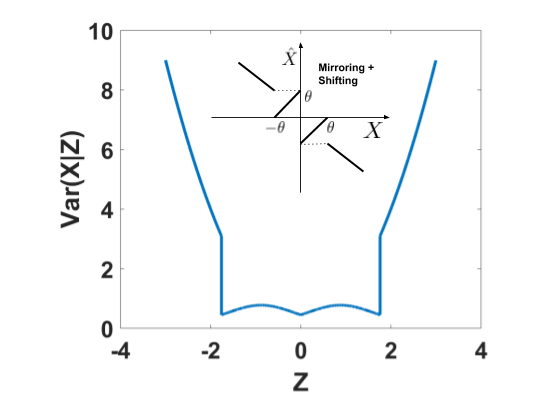}
		\caption{$\text{Var}(X|Z)$ Vs Z for mirroring+shift based scheme with $\theta_1=1.76$; $D_W = 0.4477$.}
		\label{fig:plot_distortion_mirror_shift}
\end{figure}

\begin{figure}[ht]
	\centering
	\begin{subfigure}[t]{0.45\linewidth}
		\begin{tikzpicture}
		\begin{scope}[scale=0.4]
		\draw (-3.5,0) -- (3.5,0);
		\draw (0,-0.1) node [below] {$0$};
		\draw (2,-0.1) node [below] {$\theta_2$};
		\draw (-2,-0.1) node [below] {$-\theta_2$};
		\draw (-2,-0.1) -- (-2,0.5);
		\draw (2,-0.1) -- (2,0.5);
		\draw (0,-0.1) -- (0,0.5);
		\draw (0,-0.1) -- (0,0.5);
		\draw (1,-0.1) -- (1,0.5);
		\draw (-1,-0.1) -- (-1,0.5);
		%	\draw (-1.8,0) circle(1mm);
		%	\draw[fill=black] (1.2,0) circle(1mm);
		\draw (0.6,-0.2) rectangle (1, 0.2);
		\draw[fill=black] (-0.4,-0.2) rectangle (0, 0.2);
		\draw (3,0) circle(3mm);
		\draw[fill=black] (-3,0) circle(3mm);
		\end{scope}
		\begin{scope}[scale=0.5, shift={(0,-4)}]
		\draw (-3.5,0) -- (3.5,0);
		\draw (0,-0.1) node [below] {$0$};
		\draw (2,-0.1) node [below] {$\theta_2$};
		\draw (-2,-0.1) node [below] {$-\theta_2$};
		\draw (-2,-0.1) -- (-2,0.5);
		\draw (2,-0.1) -- (2,0.5);
		\draw (0,-0.1) -- (0,0.5);
		\draw (0,-0.1) -- (0,0.5);
		\draw (1,-0.1) -- (1,0.5);
		\draw (-1,-0.1) -- (-1,0.5);
		%	\draw (-1.8,0) circle(1mm);
		%	\draw[fill=black] (0.2,0) circle(1mm);
		\draw (0.6,-0.2) rectangle (1, 0.2);
		\draw[fill=black] (-1.4,-0.2) rectangle (-1, 0.2);
		\draw (3,0.1) circle(3mm);
		\draw[fill=black] (3,-0.1) circle(3mm);
		\end{scope}
		\end{tikzpicture}		
		\caption{}
		\label{fig:scheme}
	\end{subfigure}
	\begin{subfigure}[t]{0.45\linewidth}
		\centering
		\includegraphics[width=1.5in]{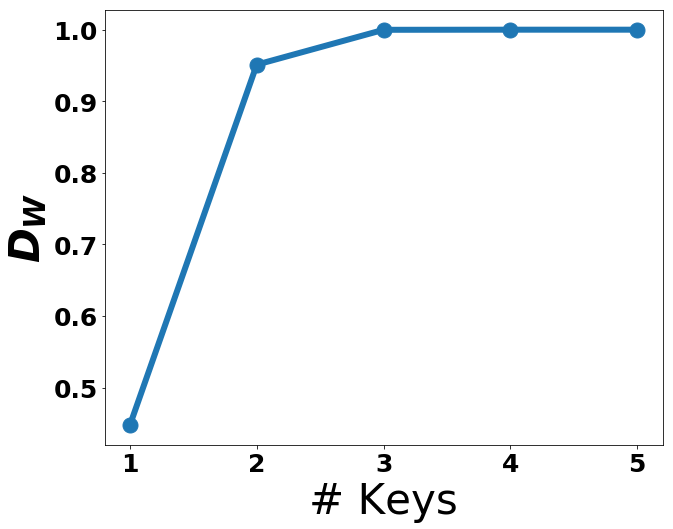}
		\caption{}
		\label{fig:dist_vs_keys}
	\end{subfigure}
	\caption{(a) Transparent shapes represent true values and solid shapes represent their respective mapping when two bit key is $11$ and $10$ respectively. (b) $D_W$, as a function of number of keys for optimal choice of $\theta_k$.}		
\end{figure}

The expected distortion metric might not be well-suited for some applications (for example if an adversary wants to shoot a drone). In this case, the adversary's estimate needs to be far from the actual state \textit{at all} time instances. Therefore, a more appropriate metric would be to consider the worst case distortion for the adversary. Consider for example the scheme in Fig.~\ref{fig::mirroring}. Here, the adversary's estimate is always the center point and the maximum expected distortion is achieved. However, when the drone is very close to the center, its mirror image will also be close to the center.  At this particular time instance, the adversary's distortion will be very small and thus the adversary will essentially know the position.

In this section, we present an encryption scheme that attempts to maximize 
the worst  case distortion for Eve. 
{The scheme}
%The main idea is to 
obfuscates the initial state 
{such that,}
%in such a way that Eve,
even if {Eve} optimally uses her {observations and} knowledge about the dynamics, her best estimate {attains maximal} distortion. We start by studying 
{the case of single shot transmission} (Theorems~\ref{thm:gaussian_scalar} and~\ref{thm:gaussian_vector}), which form the basis for maximizing the worst case distortion of a trajectory {(Theorem~\ref{thm:traj})}. 
%the problem of distorting the transmission of a single random variable in Theorems~\ref{thm:gaussian_scalar} and~\ref{thm:gaussian_vector}. These results then form the basis for maximizing the worst case distortion of a trajectory, as described in Theorem~\ref{thm:traj}.

%In this section, we discuss schemes for which we can give guarantees on the worst case distortion for Eve while maintaining a zero distortion for Bob. We will see that schemes we used in the previous section to optimize expected distortion do not perform well for worst case distortion, and are in general very difficult to design. 

\subsection{Building Step: Scalar Case}
Consider the case where the system wants to communicate a single scalar random variable $X$ to Bob by transmitting $Z$. The worst case distortion  $D_W$ for Eve will be $D_W = \min_{Z} \text{Var}(X|Z)$. Note that if Eve does not overhear $Z$, Eve {uses} the minimum mean square {error} estimate ({i.e.,} the mean value) as her estimate,
and thus experiences a worst case distortion equal to the variance of $X$. 

We first assume that $X \sim \Set{N}(0,1)$, and thus, the worst case distortion can not be larger than $1$ by~\eqref{eq::UpperBoundDW}. We next
%; we will construct scheme to achieve this. We will
develop our scheme progressively, from simple to more sophisticated steps.
We will also use the following lemma.
\begin{lemma}
	\label{lem:var}
	The variance of two real numbers $a$ and $b$ with probabilities $p_a$ and $p_b$ is given by $p_a p_b(a-b)^2$.
\end{lemma}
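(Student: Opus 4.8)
The plan is to treat this as a direct computation from the definition of variance for a discrete random variable $X$ taking the value $a$ with probability $p_a$ and the value $b$ with probability $p_b$, where implicitly $p_a + p_b = 1$. I would use the identity $\mathrm{Var}(X) = \mathbb{E}[X^2] - (\mathbb{E}[X])^2$, compute each of the two moments explicitly, and then simplify. The key algebraic fact I will exploit is that $p_a + p_b = 1$, equivalently $1 - p_a = p_b$ and $1 - p_b = p_a$, which is exactly what collapses the expression into the claimed product form.

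First I would write down the mean, $\mathbb{E}[X] = p_a a + p_b b$, and the second moment, $\mathbb{E}[X^2] = p_a a^2 + p_b b^2$. Then I would form the difference and expand the square:
\begin{align*}
\mathrm{Var}(X) &= p_a a^2 + p_b b^2 - (p_a a + p_b b)^2 \\
&= p_a a^2 + p_b b^2 - p_a^2 a^2 - 2 p_a p_b ab - p_b^2 b^2 \\
&= p_a a^2 (1 - p_a) + p_b b^2 (1 - p_b) - 2 p_a p_b ab.
\end{align*}

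Finally I would substitute $1 - p_a = p_b$ and $1 - p_b = p_a$ into the first two terms, which turns them into $p_a p_b a^2$ and $p_a p_b b^2$. Factoring out $p_a p_b$ then leaves $p_a p_b (a^2 - 2ab + b^2) = p_a p_b (a-b)^2$, which is the desired expression. There is no real obstacle here: the statement is an elementary two-point variance computation, and the only step requiring any care is remembering to invoke $p_a + p_b = 1$ rather than carrying $p_a$ and $p_b$ as free parameters. I would state that normalization assumption explicitly at the outset so the reader sees where it is used.
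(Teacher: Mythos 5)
Your proof is correct: the two-moment computation $\mathrm{Var}(X) = \mathbb{E}[X^2] - (\mathbb{E}[X])^2$ together with the normalization $p_a + p_b = 1$ yields exactly $p_a p_b (a-b)^2$, and your algebra checks out line by line. The paper states this lemma without any proof at all (treating it as an elementary fact), so your derivation is the natural and standard argument one would supply; you were also right to flag explicitly that the identity fails if $p_a$ and $p_b$ are not assumed to sum to one, since that is the only hypothesis doing any work.
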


\noindent\textbf{Mirroring or Shifting.} Reflecting around the origin ({as proposed in} Section~\ref{sec:expected}) {is not suitable for maximizing $D_W$:} indeed, 
{using Lemma~\ref{lem:var}, $\text{Var}(X|Z)$ evaluates to $0$ when $Z=0$ and attains limited distortion for small values of $Z$.}
%does not work well when $X$ takes small values: indeed $\text{Var}(X|Z)$ is $Pr(X\!=\!Z|Z) (Pr(X\!=\!-Z|Z)) (Z-(-Z))^2$ using Lemma~\ref{lem:var} which has a worst case value of zero for $Z$ approaching to $0$. 
{Another scheme consists of ``shifting'' $X$ by a constant $\theta$}
%To avoid this, we could try to use a ``shifting" scheme where we add {a constant $\theta$} to $X$ 
whenever the shared key bit is one. 
{Differently, this scheme admits  $\text{Var}(X|Z)$ which decreases fast for large values of $Z$.}

\noindent\textbf{Shifting+Mirroring.} We here combine {both schemes} in order to achieve a good performance for both small and large values of  $X$. We  start from the case where we have $k=1$ bit of key and then go to the case $k\geq 1$. \\
$\bullet \quad { k=1}.$ We select a $\theta_1\in \mathbb{R}$ that determines a window size ($\theta_1 $ is public and known by Eve). The encoding function is
%\begin{align*}
%Z & = \mathcal{E} (X, K) = \left\{\begin{array}{cl} X & \text{if } K = 0 \\ -X & \text{if } K = 1, \ |X| > \theta_1 \\ X + \theta_1 & \text{if } K = 1, \ -\theta_1 \leq X < 0 \\ X - \theta_1 & \text{if } K = 1, \ 0 \leq X < \theta_1\end{array} \right.
%\end{align*} 
{\begin{align*}
Z & = \mathcal{E} (X, K) = \left\{\begin{array}{cl} (1-2K)X & \text{if } K = 0 \ \text{or} \ |X| > \theta_1 \\ X + \theta_1 & \text{if } K = 1, \ -\theta_1 \leq X < 0 \\ X - \theta_1 & \text{if } K = 1, \ 0 \leq X < \theta_1\end{array} \right.
\end{align*} }
We note that there is one particular value of $X$,  $X = \theta_1$, which we do not transmit. Since this is of zero probability measure, it can be safely ignored.
Given $Z$, there are two possibilities for $X$: for $|Z| > \theta_1, X \in \{Z,-Z\}$; for $-\theta_1 \leq Z < 0, X \in \{Z,Z+\theta_1\}$; for $0 \leq Z < \theta_1, X \in \{Z, Z-\theta_1\}$.  
%\begin{align*}
%X \in \left\{ \begin{array}{cl}
%\{Z, -Z\} & \text{if } |Z| > \theta_1 \\
%\{Z, Z+\theta_1\} & \text{if } -\theta_1 \leq Z < 0 \\
%\{Z, Z-\theta_1\} & \text{if } 0 \leq Z < \theta_1 .
%\end{array}\right.
%\end{align*}
Using the fact that $X \sim \Set{N}(0,1)$, we can calculate the posterior probabilities $Pr(X|Z)$ and use Lemma~\ref{lem:var}  to compute $\text{Var}{(X|Z)}$. Fig.~\ref{fig:plot_distortion_mirror_shift}
plots $\text{Var}(X|Z)$ for {$\theta_1=1.76$}. The worst case distortion in this case becomes $0.4477$, which is the best we can hope for if we have only one bit of shared key. {This follows because for any mapping from $X$ to $Z$, a transmitted symbol $Z$ can have at most two pre-images (as Bob needs to reliably decode with one bit of key), and if one of these is $X=0$, then no matter what the second one is, the distortion corresponding to  $Z$ will be at most $0.4477$. Equality occurs when the second pre-image of  $Z$ is $\pm 1.76$. Note that our scheme also maps {$0$} to { $-1.76$} (for $\theta_1 = 1.76$}).}  \\
\noindent $\bullet \quad k\geq 1$. For $K \in \{0,1\}^k$,
we use the following encoding:
\begin{align*}
Z & = \mathcal{E} (X, K) = \left\{ \begin{array}{lr}
 X \:\:\:\:\:\:\:\:\:\:\:\quad\quad\quad\quad  K < 2^{k-1}, |X| > \theta_k  \\ 
 -X \:\:\:\:\:\:\:\quad\quad\quad\quad  K \geq 2^{k-1}, |X| > \theta_k  \\
\!X\!+\!K \frac{2\theta_k}{2^{k}} \text{ mod } [-\!\!\theta_k, \theta_k) \quad\quad\quad\:\:\:  \text{o.w. },\end{array} \right.
\end{align*}
\noindent where the optimal value of  the constant $\theta_k$ depends on the number $k$ of keys we have, $K$ is the decimal equivalent of a binary string of length $k$, and $r \text{ mod } [a,b) = r - i(b-a) $ is such that $i$ is an integer and $r - i(b-a) \in [a,b)$ for $r,a,b \in \mathbb{R}$. Intuitively, if $|X|> \theta_k$ then for half of the keys, we reflect across origin and for the other half we do nothing; if $|X| < \theta_k$, we divide this window of size $2\theta_k$ into ${2^k}$ equal size windows and shift a point from one window to another by jumping $K$ (in decimal) windows. An example for $k=2$  is shown in Fig.~\ref{fig:scheme} for the key values $K=11$ and $K=10$. Fig.~\ref{fig:dist_vs_keys}  plots $D_W$ as a function of the number of keys $k$. Using $k=3$ and $\theta_3=4.84$ we achieve $D_W=0.9998$ which is very close to $1$, the best we could hope for.

\begin{theorem}
	\label{thm:gaussian_scalar}
	A Gaussian random variable with mean $\mu$ and variance $\sigma^2$ can be near perfectly {($\sim 0.9998$ times the perfect distortion)}
	%times the perfect distortion) 
	distorted in the worst case settings by just using three bits of shared keys.
\end{theorem}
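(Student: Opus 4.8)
The plan is to reduce the general Gaussian case to the standard case $X \sim \mathcal{N}(0,1)$ already analyzed, and then simply invoke the numerical computation already presented in the text for $k=3$. First I would observe that the worst-case distortion metric $D_W = \min_Z \text{Var}(X|Z)$ is equivariant under affine transformations of the random variable: if $X \sim \mathcal{N}(\mu,\sigma^2)$, I can write $X = \sigma X_0 + \mu$ where $X_0 \sim \mathcal{N}(0,1)$. The key normalization step is to define a scheme for $X$ by conjugating the standard scheme through this affine map, i.e.\ I encode $X$ by first computing $X_0 = (X-\mu)/\sigma$, applying the scheme $\mathcal{E}$ constructed in the $k \geq 1$ analysis above (with the optimal window parameter $\theta_3 = 4.84$), and then rescaling the output by $\sigma$ and translating by $\mu$.

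The main claim to verify is that this conjugated scheme achieves distortion exactly $\sigma^2$ times the standard distortion, while still permitting lossless decoding by Bob. Bob can decode because the affine map $X \mapsto X_0$ is invertible and known to him, so lossless decodability is preserved under conjugation. For the distortion, I would argue that since variance scales quadratically under scaling and is invariant under translation, we have $\text{Var}(X \mid Z) = \sigma^2 \, \text{Var}(X_0 \mid Z_0)$ for the corresponding transmitted symbols, where $Z_0$ is the standard-case transmission. Taking the minimum over $Z$ therefore gives $D_W = \sigma^2 \min_{Z_0} \text{Var}(X_0 \mid Z_0)$. Since the standard-case analysis established $\min_{Z_0}\text{Var}(X_0\mid Z_0) = 0.9998$ using $k=3$ and $\theta_3 = 4.84$, and since $D_W^{\text{max}} = \text{Var}(X) = \sigma^2$ by~\eqref{eq::UpperBoundDW}, I conclude $D_W = 0.9998\,\sigma^2 \approx 0.9998\, D_W^{\text{max}}$, which is the asserted near-perfect distortion.

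The part requiring the most care is justifying that the numerical value $0.9998$ obtained for the standard normal is genuinely the near-optimal worst-case distortion and not merely an artifact of a particular $\theta_3$. I would lean on the structural argument already sketched in the text: any encoding with $k$ bits of key induces at most $2^k$ preimages for each transmitted symbol $Z$ (Bob's lossless-decoding constraint), and the worst-case distortion is controlled by the transmitted symbol whose preimage set is most tightly clustered. The window construction with parameter $\theta_k$ is designed precisely so that the clustering is uniformly spread, and Fig.~\ref{fig:dist_vs_keys} exhibits the monotone approach of $D_W$ toward the upper bound of $1$ as $k$ grows. Thus the theorem is essentially a packaging of the $k=3$ data point from that analysis, lifted to arbitrary $(\mu,\sigma^2)$ by the affine-equivariance argument; the only genuinely new content is the observation that no generality is lost in assuming $\mathcal{N}(0,1)$, which the scaling identity for conditional variance makes rigorous.
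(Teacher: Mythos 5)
Your proposal is correct and follows essentially the same route as the paper's own proof: standardize via $V = (X-\mu)/\sigma \sim \mathcal{N}(0,1)$, apply the $3$-bit shifting+mirroring scheme with $\theta_3 = 4.84$, and use the quadratic scaling of conditional variance together with the bijectivity of the affine map to conclude $D_W = \min_Z \mathrm{Var}(X|Z) = 0.9998\,\sigma^2$. The extra material you include on Bob's decodability and on why $0.9998$ is near the upper bound of $1$ (via the bound in~\eqref{eq::UpperBoundDW} and the preimage-counting argument) is left implicit in the paper but matches its surrounding discussion.
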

\begin{proof}
Generate the random variable  $V \sim \Set{N}(0,1)$ as $V = (X - \mu)/\sigma$ and encrypt it using $k=3$ key bits and the previously described scheme. For  $c = 0.9998$ we have $D_w= \min\limits_Z \text{Var}(X|Z)  = \min\limits_Z \text{Var}(\sigma V + \mu |Z)  = c\sigma^2$.
\end{proof}
\textbf{Remark:} We optimized the parameter $\theta_k$ of our scheme assuming Gaussian distribution. For other distributions, the optimal choice of $\theta_k$ and the corresponding worst case distortion would be different.
\subsection{Vector Case and Time Series}
%If we have a random vector where elements are mutually independent, then by using different three bits of keys for each of these elements, we can near perfectly distort a random vector. This can be written as follows:
\begin{theorem}(Proof in Appendix V-B)
	\label{thm:gaussian_vector}
	For a Gaussian random vector $X \in \mathbb{R}^n$ with mean $\mu$ and a diagonal covariance matrix $\Sigma$ we can achieve $D_W$ within $0.9998$ of the optimal by using $3n$ bits of shared keys.
\end{theorem}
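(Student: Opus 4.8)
The plan is to reduce the vector problem to $n$ independent scalar problems, each handled by the three-bit scheme of Theorem~\ref{thm:gaussian_scalar}. First I would exploit the diagonal structure of $\Sigma$: for a jointly Gaussian vector, a diagonal covariance matrix forces the coordinates $X_1,\dots,X_n$ to be mutually independent, with $X_i \sim \Set{N}(\mu_i,\sigma_i^2)$ and $\sigma_i^2 = \Sigma_{ii}$. This decoupling is precisely what makes the per-coordinate key budget work. I would then describe the encoder: partition the $3n$ shared key bits into $n$ independent blocks $K_1,\dots,K_n$ of three bits each, apply the scalar encoder of Theorem~\ref{thm:gaussian_scalar} coordinate-wise by setting $Z_i = \mathcal{E}(X_i,K_i)$, and transmit $Z = (Z_1,\dots,Z_n)$. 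Since Bob recovers each $X_i$ from $(Z_i,K_i)$ exactly, lossless decoding is inherited coordinate-wise.

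The heart of the argument is the distortion computation, and here I would first establish that the posterior factorizes. Because the pairs $(X_i,K_i)$ are mutually independent across $i$, and $Z_i$ is a function of $(X_i,K_i)$ alone, the observations $Z_j$ for $j\neq i$ carry no information about $X_i$; hence $f_{X_i}(x_i\mid Z) = f_{X_i}(x_i\mid Z_i)$ and the joint posterior is a product of the coordinate posteriors. Consequently $R_{X\mid Z}$ is diagonal with entries $\text{Var}(X_i\mid Z_i)$, so that $\text{tr}(R_{X\mid Z}) = \sum_{i=1}^n \text{Var}(X_i\mid Z_i)$. Because each summand depends only on its own $Z_i$ and the $Z_i$ range over their supports independently, the minimization over $Z$ separates, and I would invoke Theorem~\ref{thm:gaussian_scalar} on each coordinate to conclude
\[
D_W = \min_Z \sum_{i=1}^n \text{Var}(X_i\mid Z_i) = \sum_{i=1}^n \min_{Z_i}\text{Var}(X_i\mid Z_i) = c\sum_{i=1}^n \sigma_i^2 = c\,\text{tr}(\Sigma),
\]
with $c = 0.9998$. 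Since the no-observation baseline for a single transmission is $D_W^{\text{max}} = \text{tr}(\Sigma)$ by~\eqref{eq::UpperBoundDW}, this yields $D_W = 0.9998\, D_W^{\text{max}}$, which is the claimed guarantee.

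I expect the only genuinely delicate point to be the factorization of the posterior together with the attendant decoupling of the minimum; everything else is a direct invocation of the scalar result. In particular, the step $\min_Z$ of a separable sum equals the sum of the per-coordinate minima relies exactly on the independence of the keys and the coordinate-wise encoding, which in turn rests on the diagonal-covariance hypothesis. Were $\Sigma$ not diagonal, the coordinates would be correlated, the posterior would no longer factorize, and this separation would break down, so I would emphasize where each use of the hypothesis enters rather than treat the decoupling as automatic.
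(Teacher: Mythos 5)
Your proposal is correct and follows essentially the same route as the paper's proof in Appendix V-B: split the $3n$ key bits into $n$ independent three-bit blocks, encode each coordinate with the scalar scheme (the paper does the normalization $V^{(i)} = (X^{(i)}-\mu^{(i)})/\sqrt{\Sigma_{ii}}$ explicitly, which is exactly what your invocation of Theorem~\ref{thm:gaussian_scalar} encapsulates), and use independence across coordinates to write $\mathrm{tr}(R_{X|Z})$ as a separable sum whose minimum over $Z$ decouples into per-coordinate minima, giving $c\,\mathrm{tr}(\Sigma)$ with $c = 0.9998$ against the baseline of \eqref{eq::UpperBoundDW}. If anything, you are more explicit than the paper about why $\mathrm{Var}(X_i \mid Z) = \mathrm{Var}(X_i \mid Z_i)$ and why the min-sum exchange is legitimate, steps the paper's chain of equalities leaves implicit.
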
 
This theorem uses our 3-bit encryption for each element {in} the vector. Assume now that this vector captures the probability distribution of the initial state of dynamical system; by encrypting this state we can guarantee the following. 
\begin{theorem}(Complete Proof in Appendix V-C)
  \label{thm:traj}
  Using $3n$ bits of shared keys we can achieve $D_W$ within $0.9998$ of the optimal for the dynamical systems~\eqref{eq:sys_model} with $C = I$, $v_t = 0$,  singular values of $A$ more than $1$, and initial state $X_1 \sim \mathcal{N}(\mu, \Sigma)$, where $\Sigma$ is diagonal covariance matrix, and $U_t$ and $w_t$ are independent of $X_t$.
\end{theorem}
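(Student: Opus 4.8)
The plan is to reduce the entire trajectory problem to the single initial state, which is already handled by Theorem~\ref{thm:gaussian_vector}. First I would fix the encoding: apply the $3n$-bit vector scheme of Theorem~\ref{thm:gaussian_vector} to encrypt the initial state into $Z_1 = \mathcal{E}(X_1, K)$, and for every $t \geq 2$ transmit the ``innovation'' $Z_t = X_t - A^{t-1} X_1$ in the clear, using no additional key. Since $C = I$ and $v_t = 0$, the system observes $X_t$ exactly and can form these quantities; Bob first recovers $X_1$ from $(Z_1, K)$ and then reconstructs every $X_t = A^{t-1} X_1 + Z_t$, so decoding is lossless as required by the model.

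Next I would compute Eve's conditional covariance. Unrolling~\eqref{eq:sys_model} gives $X_t = A^{t-1} X_1 + N_t$ with $N_t := \sum_{j=1}^{t-1} A^{t-1-j}(B U_j + w_j)$, and $Z_t = N_t$ for $t \geq 2$. Because $U_j$ and $w_j$ are independent of $X_t$ (hence of $X_1$), the vector $N_t$ is independent of $X_1$; therefore conditioning on the clear innovations $Z_2, \ldots, Z_T$ does not change the posterior of $X_1$, i.e.\ $R_{X_1 | Z_1^T} = R_{X_1 | Z_1}$. Since $N_t$ is exactly determined by $Z_1^T$, it follows that $R_{X_t | Z_1^T} = A^{t-1} R_{X_1 | Z_1} (A^{t-1})^\prime$, so Eve's per-time distortion is simply the protected initial uncertainty pushed through the dynamics, with no residual $N_t$-term and no cross terms.

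The final step uses the hypothesis that all singular values of $A$ exceed $1$ to locate the worst case at $t = 1$. For any PSD matrix $M$ one has $\text{tr}(A M A^\prime) = \text{tr}(A^\prime A\, M) \geq \sigma_{\min}(A)^2\, \text{tr}(M)$, and iterating gives $\text{tr}(A^{t-1} M (A^{t-1})^\prime) \geq \sigma_{\min}(A)^{2(t-1)} \text{tr}(M) \geq \text{tr}(M)$. Applying this with $M = R_{X_1 | Z_1}$ shows that $\min_t \text{tr}(R_{X_t | Z_1^T})$ is attained at $t = 1$ and equals $\text{tr}(R_{X_1 | Z_1})$; the same inequality applied to $M = \Sigma$, together with $\text{tr}(R_{N_t}) \geq 0$, shows $D_W^{\max} = \min_t \text{tr}(R_{X_t})$ is also attained at $t = 1$ with value $\text{tr}(\Sigma)$. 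Hence $D_W = \min_{Z_1} \text{tr}(R_{X_1 | Z_1})$, which by Theorem~\ref{thm:gaussian_vector} is at least $0.9998\, \text{tr}(\Sigma) = 0.9998\, D_W^{\max}$, since the diagonal structure of $\Sigma$ keeps the coordinates of $X_1$ conditionally independent and each is protected to within a $0.9998$ factor by the scalar scheme of Theorem~\ref{thm:gaussian_scalar}.

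I expect the main obstacle to be the bookkeeping around what the innovations leak: one must verify that transmitting $Z_t = N_t$ in the clear is harmless, which hinges precisely on the independence assumption $N_t \perp X_1$ so that the posterior covariance factors cleanly as $A^{t-1} R_{X_1 | Z_1} (A^{t-1})^\prime$. The expanding-$A$ assumption then does the rest by forcing the adversary's easiest time (smallest conditional variance) to coincide with $t = 1$, exactly where our encryption is tight; without $\sigma_{\min}(A) > 1$ a later, more contracted direction could become the bottleneck and the reduction to Theorem~\ref{thm:gaussian_vector} would break down.
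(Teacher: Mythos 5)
Your proposal is correct and takes essentially the same approach as the paper: both encrypt only $X_1$ with the $3n$-bit scheme of Theorem~\ref{thm:gaussian_vector} and send the rest in the clear (your innovations $Z_t = X_t - A^{t-1}X_1$ carry exactly the same information as the paper's recursion $Z_{t+1} = AZ_t + BU_t + w_t$, each being an invertible function of the other), then invoke the independence of $(U_t, w_t)$ from the initial state so the clear part leaks nothing about $X_1$, and the assumption $\sigma_{\min}(A) > 1$ so the per-time distortion is smallest at $t=1$. The only (cosmetic) difference is in the last step: the paper runs an induction using an eigendecomposition of $A^\prime A$, whereas you iterate the equivalent inequality $\mathrm{tr}(AMA^\prime) \geq \sigma_{\min}(A)^2\,\mathrm{tr}(M)$ on the closed form $R_{X_t|Z_1^T} = A^{t-1} R_{X_1|Z_1} (A^{t-1})^\prime$.
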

{\textbf{Remark}: {Although the independence assumption on the inputs is rather restrictive, the result}
%Inputs being independents from previous states although makes it a very special case, this 
serves as a stepping stone {towards understanding general cases}}.
%for the worst case guarantees.

\begin{proof}
 The system transmits $ Z_1  = f(Y_1, K) = f(X_1,K)$ where $f$ is the encoding in Theorem~\ref{thm:gaussian_vector}, and
	\begin{align*}
	Z_{t\!+\!1}\!=\!A Z_t\!+\!(Y_{t\!+\!1}\!-\!A Y_{t}) = A Z_t\!+\!B U_t\!+\!w_t,  \forall t \in [T-1]. 
	\end{align*}
 Bob can  decode $X_1$ using  $Z_1$ and $K$.  Then:
	\begin{align*}
	\hat{X}_{t+1} & = Z_{t+1} - A Z_t + A \hat{X_t}  = B U_t + w_t + A\hat{X_t}\\
%	& = (A Z_t + B U_t + w_t) - A Z_t + A \hat{X_t} \\
%	& = A \hat{X_t} + B u_t + w_t \\
	& = A X_t  + BU_t + w_t = X_{t+1} , \ \ \ \forall t \in[T-1].
	\end{align*}
	Eve's distortion is calculated  in Appendix V-C. \end{proof}

\noindent\textbf{Complexity:} $\mathcal{O}(n^2)$ per time for both encoding \& decoding.

\section{Appendices}

\subsection{Proof of Theorem~\ref{thm::MirroringPerformanceCausal} and Corollary~\ref{cor::simplified_mirror}}
\label{app::MirroringPerformanceCausal}
%We start by noting that
%
%\begin{align*}
%D^E_{EE} &= \mathbb{E}_Z \frac{1}{T} \text{tr}\left(R_{X|Z} \right) = \mathbb{E}_Z \frac{1}{T} \sum\limits_{i=1}^T \text{tr}\left(R_{X_i|Z} \right).
%%&= \mathbb{E}_Z \frac{1}{T} \sum\limits_{i=1}^T \text{tr}\left(R_{X_i|Z_i} \right)
%\end{align*}
%where the last equality follows by noting that $Z_{1}^{i-1} - Z_i - X_i$ is a Markov chain. Next, we note that

We start by computing $R_{X_t|Z_1^T}$. Note that given a sequence of transmitted symbol $Z_1^T$ there are two possible values of sequence of message symbols $X_1^T$ which are $X_1^T = Z_1^T$ and $X_1^T = \tilde{Z}_1^T$, where $\tilde{Z_t}$ is the image of $Z_t$ across the affine subspace given by $S_tx = b_t$. 

{With this, the posterior probability of $X_t=Z_t$ given $Z_1^T$ i.e., $Pr (X_t = Z_t |Z_1^T)$ will be equal to {$Pr (X_1^T = Z_1^T|Z_1^T) \coloneqq p_Z$. We note that $p_Z = \frac{f(Z)}{f(Z) + f(\tilde{Z})}$}, where $\tilde{Z}: = [{\tilde{Z}_1}^\prime \:\: {\tilde{Z}_2}^\prime \:\: \cdots \:\: {\tilde{Z}_T}^\prime]^\prime$.} Then $\mathbb{E}(X_t|Z_1^T)$,

\begin{align*}
&= p_Z {Z_t} + (1-p_Z)(\tilde{{Z_t}}) = {Z_t} + 2(1-p_Z)S_t^\prime\left(b_t - S_t{Z_t} \right).
\end{align*}

\begin{align*}
R_{X_t|Z_1^T} &= \mathbb{E}_{X_t|Z_1^T} \left[ \left(X_t - \mathbb{E}(X_t|Z_1^T) \right)  \left(X_t - \mathbb{E}(X_t|Z_1^T) \right)^\prime \right]\\
&= p_Z\left(4(1-p_Z)^2 \left( S_t^\prime(b_t - S_tZ_t)(b_t - S_tZ_t)^\prime S_t \right) \right) \\
&+ (1-p_Z)\left(4 {p_Z}^2 \left( S_t^\prime(b_t - S_tZ_t)(b_t - S_tZ_t)^\prime S_t \right) \right) \\
&= \underbrace{4p_Z(1-p_Z)}_{\eta(Z)} S_t^\prime(b_t - S_tZ_t)(b_t - S_tZ_t)^\prime S_t.
\end{align*}

\begin{align*}
D_E &= \mathbb{E}_Z \frac{1}{T} \sum\limits_{t=1}^T \text{tr}\left(R_{X_t|Z_1^T} \right) \\
&= \mathbb{E}_Z \frac{1}{T} \sum\limits_{t=1}^T \text{tr}\left(\eta(Z) S_t^\prime(b_t - S_tZ_t)(b_t - S_tZ_t)^\prime S_t\right) 
%&= \frac{1}{T} \mathbb{E}_Z  \left[\sum\limits_{t=1}^T \eta(Z) \|S_tZ_t - b_t \|^2 \right] \\
%&= \frac{1}{T} \mathbb{E}_Z \left[\sum\limits_{t=1}^T 4 p_Z (1-p_Z)  \|S_tZ_t - b_t \|^2 \right] \\
%&= \frac{1}{T} \mathbb{E}_Z \left[\sum\limits_{t=1}^T 4 \frac{f_X(Z) f_X(\tilde{Z})}{(f_X(Z) + f_X(\tilde{Z}))^2}  \|S_tZ_t - b_t \|^2 \right].
%&= \mathbb{E}_Z \frac{1}{T} \sum\limits_{t=1}^T \eta(Z_t) \text{tr}\left(S_tS_t^T(b_t - S_tZ_t)(b_t - S_tZ_t)^T \right) \\
%&= \mathbb{E}_Z \frac{1}{T} \sum\limits_{t=1}^T \eta(Z_t) \text{tr}\left((b_t - S_tZ_t)(b_t - S_tZ_t)^T \right) \\
%&= \mathbb{E}_Z \frac{1}{T} \sum\limits_{t=1}^T \eta(Z_t) \text{tr}\left(S_t(S_t^Tb_t - Z_t)(S_t^Tb_t - Z_t)^T S_t^T\right) \\
%&= \frac{1}{T} \sum\limits_{t=1}^T \mathbb{E}_{Z_t} \eta(Z_t) \text{tr}\left(S_t(Z_t-S_t^Tb_t)(Z_t-S_t^Tb_t)^T S_t^T\right)
\end{align*}

\begin{align*}
%D_E &= \mathbb{E}_Z \frac{1}{T} \sum\limits_{t=1}^T \text{tr}\left(R_{X_t|Z_1^T} \right) \\
%&= \mathbb{E}_Z \frac{1}{T} \sum\limits_{t=1}^T \text{tr}\left(\eta(Z) S_t^T(b_t - S_tZ_t)(b_t - S_tZ_t)^T S_t\right) \\
%&= \mathbb{E}_Z \frac{1}{T} \sum\limits_{t=1}^T \eta(Z) \text{tr}\left( S_t^T(b_t - S_tZ_t)(b_t - S_tZ_t)^T S_t\right) \\
&= \mathbb{E}_Z \frac{1}{T} \sum\limits_{t=1}^T \eta(Z) \text{tr}\left( S_t^\prime(b_t - S_tZ_t)(b_t - S_tZ_t)^\prime S_t\right) \\
&= \frac{1}{T} \mathbb{E}_Z  \left[\sum\limits_{t=1}^T \eta(Z) \|S_tZ_t - b_t \|^2 \right] \\
&= \frac{1}{T} \mathbb{E}_Z \left[\sum\limits_{t=1}^T 4 p_Z (1-p_Z)  \|S_tZ_t - b_t \|^2 \right] \\
&= \frac{1}{T} \mathbb{E}_Z \left[\sum\limits_{t=1}^T 4 \frac{f_X(Z) f_X(\tilde{Z})}{(f_X(Z) + f_X(\tilde{Z}))^2}  \|S_tZ_t - b_t \|^2 \right].
%&= \mathbb{E}_Z \frac{1}{T} \sum\limits_{t=1}^T \eta(Z_t) \text{tr}\left(S_tS_t^T(b_t - S_tZ_t)(b_t - S_tZ_t)^T \right) \\
%&= \mathbb{E}_Z \frac{1}{T} \sum\limits_{t=1}^T \eta(Z_t) \text{tr}\left((b_t - S_tZ_t)(b_t - S_tZ_t)^T \right) \\
%&= \mathbb{E}_Z \frac{1}{T} \sum\limits_{t=1}^T \eta(Z_t) \text{tr}\left(S_t(S_t^Tb_t - Z_t)(S_t^Tb_t - Z_t)^T S_t^T\right) \\
%&= \frac{1}{T} \sum\limits_{t=1}^T \mathbb{E}_{Z_t} \eta(Z_t) \text{tr}\left(S_t(Z_t-S_t^Tb_t)(Z_t-S_t^Tb_t)^T S_t^T\right)
\end{align*}

Now, $Z_1^T$ is the transmitted symbols if $X_1^T=Z_1^T$ and key was zero or if $\{X_t=\tilde{Z_t}, \ \forall t \in [T] \}$ and key was one. So $f_Z(Z) = \frac{f_X(Z) + f_X(\tilde{Z})}{2}$. Thus $D_E$,
\begin{align*}
&= \frac{1}{T} \mathbb{E}_Z \left[\sum\limits_{t=1}^T 4 \frac{f_X(Z) f_X(\tilde{Z})}{(f_X(Z) + f_X(\tilde{Z}))^2}  \|S_tZ_t - b_t \|^2 \right] \\
&= \frac{1}{T}\int\!f_Z(Z) \left[\sum\limits_{t=1}^T \frac{4 f_X(Z) f_X(\tilde{Z})}{(f_X(Z) + f_X(\tilde{Z}))^2}  \|S_tZ_t - b_t \|^2 \right]\!dZ \\
&= \frac{1}{T}\int \left[\sum\limits_{t=1}^T \frac{2 f_X(Z) f_X(\tilde{Z})}{f_X(Z) + f_X(\tilde{Z})} \|S_tZ_t-b_t\|^2  \right] \ dZ \\
&= \frac{1}{T} \mathbb{E}_{X} \left[ \sum\limits_{t=1}^T \frac{2f_X(\tilde{X})}{f_X(X) + f_X(\tilde{X})} \|S_tX_t-b_t\|^2\right],
\end{align*}
which proves \eqref{eq::DEE_rough_causal}. 
Again, if we can choose $S_t$'s and $b_t$'s such that,
\begin{align*}
f_X(X) = f_X(\tilde{X}), \ \forall X \in \mathbb{R}^{n T},
\end{align*} 
the distortion $D_E$ becomes,
\begin{align*}
&\frac{1}{T} \mathbb{E}_X \sum\limits_{t=1}^T \|S_tX_t - b_t \|^2  =  \frac{1}{T} \sum\limits_{t=1}^T \mathbb{E}_{X_t} \|S_tX_t - b_t \|^2 \\
= &  \frac{1}{T} \sum\limits_{t=1}^T \text{tr} \left( S_t R_{X_t} S_t^\prime + (b_t - S_t \mu_{X_t}) (b_t - S_t \mu_{X_t})^\prime\right),
\end{align*}
which proves \eqref{eq::DEE_simple_causal}. 
 
\subsection{Proof for Theorem~\ref{thm:gaussian_vector}}
\label{app:gaussian_vector}
	Let the shared key $K$ is $(K_1, K_2, \ldots, K_n)$ where all $K_i$'s are i.i.d. and uniformly distributed in $\{0,1\}^3$.
	Let us also assume that $X = (X^{(1)}, X^{(2)}, \ldots, X^{(n)})$, where each $X^{(i)} \in \mathbb{R}$. Similar to the scheme for scalar case, we create a random vector $V = (V^{(1)}, \ldots, V^{(n)})$ where,
	\begin{align*}
	V^{(i)} & = (X^{(i)} - \mu^{(i)})/\sqrt{\Sigma_{ii}},
	\end{align*}
	and encode $V^{(i)}$ using key $K_i$ as in the case of a scalar for all $i \in [n]$. Thus, the distortion ${D_W} $ will be,
	\begin{align*}
	& =\min\limits_{Z} \text{tr}(R_{X | Z}) = \min\limits_{Z} \sum\limits_{i=1}^{n} \text{Var}(X^{(i)} | Z) \\
	& = \min\limits_{Z} \sum\limits_{i=1}^{n} (\Sigma_{ii}) \text{Var}(V^{(i)} | Z) = \sum\limits_{i=1}^{n} (\Sigma_{ii}) \min\limits_{Z} \text{Var}(V^{(i)} | Z) \\
	& = \sum\limits_{i=1}^{n} (\Sigma_{ii}) \min\limits_{Z^{(i)}} \text{Var}(V^{(i)} | Z^{(i)}) = c \sum\limits_{i=1}^{n} (\Sigma_{ii}) = c~\text{tr} (\Sigma),
	\end{align*}
	where $c = 0.9998$. And since $\text{tr} (\Sigma)$ is the expected distortion even when the adversary has no observations, and as we can not beat this by~\eqref{eq::UpperBoundDW}, this is optimal.

\subsection{Proof for Theorem~\ref{thm:traj}}
\label{app:traj}
\noindent\textbf{Distortion at the adversary's end.} Based on the coding scheme we can see that the adversary get $BU_t + w_t$ by just subtracting $AZ_t$ from $Z_{t+1}$ for $t \in [1:T-1]$. So the adversary's information is given by following set:
\begin{align*}
E_{\text{info}} & = \left\{Z_1, B U_t + w_t, \ t \in [1:T-1] \right\} \\
&= \left\{f(X_1, K), B U_t + w_t, \ t \in [1:T-1] \right\}.
\end{align*} 
Thus, $D(t,Z_1^T) = D(t, E_{\text{info}})  = \text{tr}(R_{{X_t}| E_{\text{info}}}).$

Let's first compute $D(t=1,Z_1^T)$,
\begin{align*}
D(t=1,Z_1^T) & = \text{tr}(R_{{X_1}| E_{\text{info}}}) \stackrel{(a)}{=} \text{tr}(R_{{X_1}| f(X_1, K)})  \stackrel{(b)}{=} c~\text{tr} (\Sigma),
\end{align*}
where (a) is because $U_t$ and $w_t$ are independent on $X_t$ and (b) is due to the encoding used in Theorem~\ref{thm:gaussian_vector} with $c = 0.9998$. 

Now, for other time instances we can use induction to prove that we will have worst case distortion at least $\text{tr}(\Sigma)$.
\begin{align*}
&D(t+1,Z_1^T) = \text{tr}(R_{{X_{t+1}}| E_{\text{info}}}) = \text{tr}(R_{{(AX_t + B U_t + w_t)}| E_{\text{info}}}) \\
& = \text{tr}(R_{{(AX_t)}| E_{\text{info}}})  = \text{tr}(A R_{{X_t| E_{\text{info}}}} A^\prime) = \text{tr}(A^\prime A R_{{X_t} | E_{\text{info}}}) \\
& \stackrel{(a)}{=} \text{tr}(V \Lambda V^\prime R_{{X_t} | E_{\text{info}}}) = \text{tr}(\Lambda V^\prime R_{{X_t}| E_{\text{info}}} V) \\
& \stackrel{(b)}{=} \sum\limits_{i=1}^n \lambda_i d_i(V^\prime R_{X_t| E_{\text{info}}} V ) \stackrel{(c)}{\geq} \sum\limits_{i\in [n]} d_i(V^\prime R_{X_t| E_{\text{info}}} V ) \\
& \stackrel{(d)}{=} \sum\limits_{i\in [n]} \nu_i(V^\prime R_{X_t | E_{\text{info}}} V)  \stackrel{(e)}{=}  \sum\limits_{i\in [n]} \nu_i(R_{X_t | E_{\text{info}}}) \\
& = \text{tr}(R_{X_t | E_{\text{info}}}) \stackrel{(f)}{\geq} c~\text{tr}(\Sigma),
\end{align*} 
where in (a), we do eigenvalue decomposition of $A^\prime A$ which is a positive definite matrix and thus will have non negative eigenvalues; in (b) $d_i(V^\prime R_{X_t| E_{\text{info}}} V )$ is the $i$-th diagonal entry of $V^\prime R_{X_t|E_{\text{info}}} V$; (c) is true because $V^\prime R_{X_t|E_{\text{info}}} V$ is a positive definite matrix and all the diagonal entries of a positive semi definite matrix are non-negative and because of our assumption that singular values of $A$, i.e. the square root of eigenvalues of $A^\prime A$ are all more than one; (d) is because summation of eigenvalues is equal to the sum of all the diagonal entries for any square matrix, where $\nu_i(V^\prime R_{X_t | E_{\text{info}}} V)$ is the $i$-th eigenvalue of $V^\prime R_{X_t | E_{\text{info}}} V$; (e) is because a unitary transform preserve the eigenvalues. Finally (f) is because of the induction step. 

\bibliographystyle{IEEEtran}
\bibliography{CDC2018}

% Generated by IEEEtran.bst, version: 1.14 (2015/08/26)
\begin{thebibliography}{10}
\providecommand{\url}[1]{#1}
\csname url@samestyle\endcsname
\providecommand{\newblock}{\relax}
\providecommand{\bibinfo}[2]{#2}
\providecommand{\BIBentrySTDinterwordspacing}{\spaceskip=0pt\relax}
\providecommand{\BIBentryALTinterwordstretchfactor}{4}
\providecommand{\BIBentryALTinterwordspacing}{\spaceskip=\fontdimen2\font plus
\BIBentryALTinterwordstretchfactor\fontdimen3\font minus
  \fontdimen4\font\relax}
\providecommand{\BIBforeignlanguage}[2]{{%
\expandafter\ifx\csname l@#1\endcsname\relax
\typeout{** WARNING: IEEEtran.bst: No hyphenation pattern has been}%
\typeout{** loaded for the language `#1'. Using the pattern for}%
\typeout{** the default language instead.}%
\else
\language=\csname l@#1\endcsname
\fi
#2}}
\providecommand{\BIBdecl}{\relax}
\BIBdecl

\bibitem{wan2016exploiting}
J.~Wan, A.~B. Lopez, and M.~A.~A. Faruque, ``Exploiting wireless channel
  randomness to generate keys for automotive cyber-physical system security,''
  in \emph{2016 ACM/IEEE 7th International Conference on Cyber-Physical Systems
  (ICCPS)}, April 2016, pp. 1--10.

\bibitem{Zan2013KeyAA}
B.~Zan, M.~Gruteser, and F.~Hu, ``Key agreement algorithms for vehicular
  communication networks based on reciprocity and diversity theorems,''
  \emph{IEEE Transactions on Vehicular Technology}, vol.~62, no.~8, pp.
  4020--4027, Oct 2013.

\bibitem{Trappe}
W.~Trappe, R.~Howard, and R.~S. Moore, ``Low-energy security: Limits and
  opportunities in the internet of things,'' \emph{IEEE Security Privacy},
  vol.~13, no.~1, pp. 14--21, Jan 2015.

\bibitem{6521318}
P.~Koopman and C.~Szilagyi, ``Integrity in embedded control networks,''
  \emph{IEEE Security Privacy}, vol.~11, no.~3, pp. 61--63, May 2013.

\bibitem{shannon1949communication}
C.~E. Shannon, ``Communication theory of secrecy systems,'' \emph{Bell system
  technical journal}, vol.~28, no.~4, pp. 656--715, 1949.

\bibitem{wyner1975wire}
A.~D. Wyner, ``The wire-tap channel,'' \emph{The Bell System Technical
  journal}, vol.~54, no.~8, pp. 1355--1387, 1975.

\bibitem{yamamoto1988rate}
H.~Yamamoto, ``A rate-distortion problem for a communication system with a
  secondary decoder to be hindered,'' \emph{IEEE Transactions on Information
  Theory}, vol.~34, no.~4, pp. 835--842, July 1988.

\bibitem{schieler2014rate}
C.~Schieler and P.~Cuff, ``Rate-distortion theory for secrecy systems,''
  \emph{IEEE Transactions on Information Theory}, vol.~60, no.~12, pp.
  7584--7605, Dec 2014.

\bibitem{chiyo17}
C.~Tsai, G.~K. Agarwal, C.~Fragouli, and S.~Diggavi, ``A distortion based
  approach for protecting inferences,'' in \emph{2017 IEEE International
  Symposium on Information Theory (ISIT)}, June 2017, pp. 1913--1917.

\bibitem{Tsiamis2017StateSecrecyCF}
A.~Tsiamis, K.~Gatsis, and G.~J. Pappas, ``State-secrecy codes for networked
  linear systems,'' \emph{CoRR}, vol. abs/1709.04530, 2017.

\bibitem{TSIAMIS20178385}
------, ``State estimation with secrecy against eavesdroppers,''
  \emph{IFAC-PapersOnLine}, vol.~50, no.~1, pp. 8385 -- 8392, 2017, 20th IFAC
  World Congress.

\bibitem{tanakaDirected}
T.~Tanaka, M.~Skoglund, H.~Sandberg, and K.~H. Johansson, ``Directed
  information and privacy loss in cloud-based control,'' in \emph{2017 American
  Control Conference (ACC)}, May 2017, pp. 1666--1672.

\bibitem{DOOREN201710090}
D.~van Dooren, S.~Schiessl, A.~Molin, J.~Gross, and K.~H. Johansson, ``Safety
  analysis for controller handover in mobile systems,''
  \emph{IFAC-PapersOnLine}, vol.~50, no.~1, pp. 10\,090 -- 10\,095, 2017, 20th
  IFAC World Congress.

\bibitem{Malik2013}
W.~A. Malik, N.~C. Martins, and A.~Swami, \emph{LQ Control under Security
  Constraints}.\hskip 1em plus 0.5em minus 0.4em\relax Heidelberg: Springer
  International Publishing, 2013, pp. 101--120.

\bibitem{Pappas16}
J.~Cortés, G.~E. Dullerud, S.~Han, J.~L. Ny, S.~Mitra, and G.~J. Pappas,
  ``Differential privacy in control and network systems,'' in \emph{2016 IEEE
  55th Conference on Decision and Control (CDC)}, Dec 2016, pp. 4252--4272.

\bibitem{kumar2012opportunities}
V.~Kumar and N.~Michael, ``Opportunities and challenges with autonomous micro
  aerial vehicles,'' \emph{The International Journal of Robotics Research},
  vol.~31, no.~11, pp. 1279--1291, 2012.

\end{thebibliography}

%\newpage
%

\end{document}